\newcommand\Crestrict[2]{{
  \left.\kern-\nulldelimiterspace 
  #1 
  \right|_{#2} 
  }}
\DeclareMathOperator{\reducesto}{\leq_{L}}
\DeclareMathOperator{\sa}{\leq_{SA}}
\DeclareMathOperator{\fpol}{fPol}
\DeclareMathOperator{\supp}{{\rm supp}}
\DeclareMathOperator{\bsupp}{{\bf supp}}
\DeclareMathOperator{\CSP}{CSP}
\DeclareMathOperator{\VCSP}{VCSP}
\DeclareMathOperator{\MinCSP}{Min-CSP}
\DeclareMathOperator{\MaxCSP}{Max-CSP}
\DeclareMathOperator{\Las}{Lasserre}
\DeclareMathOperator{\feas}{Feas}
\DeclareMathOperator{\opt}{Opt}
\DeclareMathOperator{\rdom}{{Feas}}
\DeclareMathOperator{\pol}{Pol}
\DeclareMathOperator*{\E}{\mathop{{\mathbb E}}}
\DeclareMathOperator{\ar}{ar}
\DeclareMathOperator{\avg}{{\mathrm avg}}
\newcommand{\eq}[1]{\ensuremath{\phi^{#1}_{=}}}
\renewcommand{\vec}[1]{\ensuremath{\mathbf{#1}}}
\newcommand{\inst}{\ensuremath{I}}
\newcommand{\tup}[1]{\ensuremath{\mathbf #1}}
\newcommand{\N}{\mbox{$\mathbb N$}}
\newcommand{\qq}{\ensuremath{\overline{\mathbb{Q}}}}
\newcommand{\lpopt}[2]{{\rm Opt_{LP}}}
\newcommand{\sdpopt}[2]{{\rm Opt_{SDP}}}
\newcommand{\sdpval}[0]{{\rm Val_{SDP}}}
\newcommand{\vcspopt}[0]{{\rm Opt_{VCSP}}}
\newcommand{\vcspval}[0]{{\rm Val_{VCSP}}}
\newcommand{\blambda}{\mbox{\boldmath $\lambda$}}
\newcommand{\bmu}{\mbox{\boldmath $\mu$}}
\newcommand{\bkappa}{\mbox{\boldmath $\kappa$}}
\date{}
\newtheorem{theorem}{Theorem}
\newtheorem{lemma}{Lemma} 
\newtheorem*{lemma*}{Lemma} 
\newtheorem*{proposition*}{Proposition} 
\newtheorem*{theorem*}{Theorem} 
\newtheorem{corollary}{Corollary}
\newtheorem{definition}{Definition}
\theoremstyle{remark}
\newtheorem{example}{Example}
\newtheorem{remark}{Remark}
\begin{document}
\title{The limits of SDP relaxations for general-valued CSPs}

\author{
Johan Thapper\\
Universit\'e Paris-Est, Marne-la-Vall\'ee, France\\
\texttt{thapper@u-pem.fr}
\and
Stanislav \v{Z}ivn\'{y}\thanks{An extended abstract of this work appeared in
\emph{Proceedings of the 32nd Annual ACM/IEEE Symposium on Logic in Computer
Science (LICS)}~\cite{tz17:lics}. This project has received funding from the
European Research Council (ERC) under the European Union's Horizon 2020 research
and innovation programme (grant agreement No 714532). The paper reflects only
the authors' views and not the views of the ERC or the European Commission. The
European Union is not liable for any use that may be made of the information
contained therein. Stanislav \v{Z}ivn\'y was supported by a Royal Society
University Research Fellowship.}\\
University of Oxford, UK\\
\texttt{standa.zivny@cs.ox.ac.uk}
}

\maketitle

\begin{abstract} 

It has been shown that for a general-valued constraint language $\Gamma$ the
following statements are equivalent: (1) any instance of $\VCSP(\Gamma)$ can be
solved to \emph{optimality} using a \emph{constant level} of the Sherali-Adams
LP hierarchy; (2) any instance of $\VCSP(\Gamma)$ can be solved to optimality
using the \emph{third level} of the Sherali-Adams LP hierarchy; (3) the support
of $\Gamma$ satisfies the ``\emph{bounded width condition}'', i.e., it contains
weak near-unanimity operations of all arities.

We show that if the support of $\Gamma$ violates the bounded width condition then
not only is $\VCSP(\Gamma)$ not solved by a constant level of the Sherali-Adams
LP hierarchy but it requires linear levels of the Lasserre SDP
hierarchy (also known as the sum-of-squares SDP hierarchy). For $\Gamma$
corresponding to linear equations in an Abelian group, this result follows from
existing work on inapproximability of Max-CSPs. By a breakthrough result of Lee,
Raghavendra, and Steurer [STOC'15], our result implies that for any $\Gamma$
whose support violates the bounded width condition no SDP relaxation of
polynomial-size solves $\VCSP(\Gamma)$.

We establish our result by proving that various reductions preserve exact solvability
by the Lasserre SDP hierarchy (up to a constant factor in the level of the
hierarchy). Our
results hold for general-valued constraint languages, i.e., sets of functions on
a fixed finite domain that take on rational or infinite values, and thus also
hold in notable special cases of $\{0,\infty\}$-valued languages (CSPs),
$\{0,1\}$-valued languages (Min-CSPs/Max-CSPs), and $\mathbb{Q}$-valued
languages (finite-valued CSPs). 

\end{abstract}

\section{Introduction}
\label{sec:intro}

\subsection{CSPs and exact solvability}

Constraint satisfaction problems (CSPs) constitute a broad class of
computational problems that involve assigning labels to variables subject to
constraints to be satisfied and/or optimised, as nicely explained in a survey by
Hell and Ne\v{s}et\v{r}il~\cite{Hell08:survey}. One line of
research focuses on CSPs parametrised by a set of (possibly weighted) relations
known as a constraint language~\cite{Jeavons97:jacm}. In their influential
paper, Feder and Vardi conjectured that for decision CSPs every constraint
language gives rise to a class of problems that belongs to P or is
NP-complete~\cite{Feder98:monotone}. The dichotomy conjecture of Feder and
Vardi has been verified in several
important special
cases by Schaefer~\cite{Schaefer78:complexity}, Hell and
Ne\v{s}et\v{r}il~\cite{Hell90:h-coloring},
Bulatov~\cite{Bulatov06:3-elementJACM,Bulatov11:conservative}, and Barto, Kozik,
and Niven~\cite{Barto09:siam}
mostly using the so-called algebraic
approach~\cite{Bulatov05:classifying,Barto14:jacm}.
Remarkably, the dichotomy conjecture has recently been solved independently by
Bulatov~\cite{Bulatov17:focs} and Zhuk~\cite{Zhuk17:focs}, respectively.

Using concepts from the extensions of the algebraic approach to optimisation
problems~\cite{cccjz13:sicomp}, the exact solvability of purely optimisation
CSPs, known as finite-valued CSPs, has been established by the authors~\cite{tz16:jacm} (these
include Min/Max-CSPs as a special case).
Putting together decision and optimisation problems in one framework, the exact
complexity of so-called general-valued CSPs has been
established, modulo the (now proved) classification of
decision CSPs, by the works of Kozik and Ochremiak~\cite{Kozik15:icalp} and
Kolmogorov, Krokhin, and Rol\'inek~\cite{Kolmogorov17:sicomp}. A result that proved useful when classifying both finite-valued
and general-valued CSPs is an algebraic characterisation of the power of the
basic linear programming relaxation for decision CSPs~\cite{kun12:itcs} and
general-valued CSPs~\cite{ktz15:sicomp}. 

\subsection{Approximation}

Convex relaxations, such as linear programming (LP) and semidefinite programming
(SDP), have long been powerful tools for designing efficient exact and approximation
algorithms~\cite{Vazirani2013approximation,Williamson2011design}. In particular,
for many combinatorial problems, the introduction of semidefinite programming relaxations allowed for
a new structural and computational
perspective~\cite{Goemans95:jacm,Karger98:jacm,Arora09:jacm}.
The Lasserre SDP hierarchy~\cite{Lasserre02:sjopt} is a sequence of semidefinite
relaxations for certain $0$-$1$ polynomial programs, each one more constrained
than the previous one. The $k$th level of the Lasserre SDP hierarchy requires any
set of $k$ variables of the relaxation, which live in a 
finite-dimensional real vector space, to be consistent in a
very strong sense. 
The $k$th level of the hierarchy can be solved in time
$L\cdot n^{O(k)}$, where $n$ is the number of variables and $L$ is the length of a binary encoding of the input. If an
integer program has $n$ variables then the $n$th level of the Lasserre SDP hierarchy
is tight, i.e., the only feasible solutions are convex combinations of integral
solutions. The Lasserre SDP hierarchy is similar in spirit to the
Lov\'asz-Schrijver SDP hierarchy~\cite{Lovasz91:jo} and the
Sherali-Adams LP hierarchy~\cite{Sherali1990}, but the Lasserre SDP hierarchy is
stronger~\cite{Laurent03:mor}. 

An important line of research, going back to a seminal work of
Yannakakis~\cite{Yannakakis91:jcss}, focuses on proving lower bounds on the size
of LP formulations.
Chan, Lee, Raghavendra, and Steurer~\cite{Chan16:jacm-approx} showed that Sherali-Adams LP relaxations are
universal for Max-CSPs in the sense that for every polynomial-size LP relaxation
of a Max-CSP instance $I$ there is a constant level of the Sherali-Adams LP
hierarchy of $I$ that achieves the same approximation guarantees. This result
has been improved to subexponential-size LP
relaxations by Kothari, Meka, and Raghavednra~\cite{Kothari17:stoc}. Moreover,
Ghosh and Tulsiani~\cite{Ghosh17:ccc} have shown that in
fact the basic LP relaxation enjoys the same universality property (among
super-constant levels of the Sherali-Adams LP hierarchy).
For related work on the integrality gaps for the Sherali-Adams LP
and Lov\'asz-Schrijver SDP hierarchies, we refer the reader
to~\cite{Schoenebeck08:focs,Charikar09:stoc,Chlamtac12:sdp}
and the references therein.

Recent years have seen some remarkable progress on lower bounds for the Lasserre
SDP hierarchy. 
Schoenebeck showed that certain problems require linear levels of the
Lasserre SDP
hierarchy~\cite{Schoenebeck08:focs}. In
particular, Schoenebeck showed, among other things, that $cn$ levels, for some
constant $0<c<1$,
of the Lasserre SDP hierarchy cannot prove that certain Max-CSPs (corresponding to
equations on the Boolean domain) are unsatisfiable~\cite{Schoenebeck08:focs}. Tulsiani extended this work 
to Max-CSPs corresponding to equations over Abelian groups of prime
orders~\cite{Tulsiani09:stoc}. Finally, Chan extended this to Max-CSPs corresponding to equations
over Abelian groups of arbitrary size~\cite{Chan16:jacm}.
In a recent breakthrough, Lee, Raghavendra, and Steurer~\cite{Lee15:stoc} showed that the Lasserre 
SDP relaxations are universal for Max-CSPs in the sense that for every
polynomial-size SDP relaxation of a Max-CSP instance $I$ there is a constant
level of the Lasserre SDP hierarchy of $I$ that achieves the same approximation
guarantees. One of the many ingredients of the proof in~\cite{Lee15:stoc} is to view the Lasserre
SDP hierarchy as the Sum-of-Squares
algorithm~\cite{Lasserre01:jo},
which relates to proof complexity~\cite{ODonnell13:soda-approximability}. (In fact, Schoenebeck's
above-mentioned result had independently been obtained by
Grigoriev~\cite{Grigoriev01:tcs} using this view.)

\subsection{Bounded width condition}

We now informally describe the bounded width condition (BWC). A set of
operations on a fixed finite domain satisfies the BWC if it contains ``weak
near-unanimity'' operations of all possible arities. An operation is called a
weak near-unanimity operation if it is symmetric when all the arguments but one
are the same. (A formal definition is given in Section~\ref{sec:sa}.) An example
of a ternary weak-near unanimity operation is a majority operation, which
satisfies $f(x,x,y)=f(x,y,x)=f(y,x,x)=x$ for all $x$ and $y$.
Polymorphisms~\cite{Bulatov05:classifying}, which are at the heart of the
algebraic approach to CSPs, are operations that combine satisfying assignments
to a CSP instance and produce a new satisfying assignment. We say that a CSP
instance $I$ satisfies the BWC if the set of all polymorphisms of $I$ satisfies
the BWC.

In an important series of papers by Mar\'oti and
McKenzie~\cite{Maroti08:weakly}, Larose and Z\'adori~\cite{LaroseZadori07:au},
Barto and Kozik~\cite{Barto14:jacm}, and Bulatov~\cite{Bulatov16:lics}, it was
established that the BWC captures precisely the decision CSPs that are solved by
Datalog, a natural and well-studied local propagation
algorithm~\cite{Feder98:monotone}.

\subsection{Contributions}

In our previous work~\cite{tz17:sicomp} (which we refer the reader to for more
information and background), we studied the power of the Sherali-Adams LP
hierarchy for exact solvability of general-valued CSPs. In particular, we have
shown in~\cite{tz17:sicomp} that general-valued CSPs that are solved exactly by
a constant level of the Sherali-Adams LP hierarchy are precisely those
general-valued CSPs that satisfy the BWC.
In more detail, fractional polymorphisms of a general-valued CSP instance $I$
are probability distributions over polymorphisms of $I$ that in a sense preserve
the weighted relations of $I$.
For a constraint language $\Gamma$, we denote by $\supp(\Gamma)$ the
set of operations that appear in the support of some fractional polymorphism of
$\Gamma$. (Formal definitions are given in Section~\ref{sec:prelims}.)  The
following theorem is the main result of~\cite{tz17:sicomp}.

\begin{theorem}[\protect{\cite[Theorem~3.3]{tz17:sicomp}}]\label{thm:sa}
Let $\Gamma$ be a general-valued constraint language of finite size. The following are equivalent:
\begin{enumerate}[(i)]
\item $\VCSP(\Gamma)$ is solved by a \emph{constant} level of the Sherali-Adams LP hierarchy. \label{cnd:bound}
\item $\VCSP(\Gamma)$ is solved by the \emph{third} level of the Sherali Adams LP hierarchy. \label{cnd:23}
\item $\supp(\Gamma)$ satisfies the BWC.  \label{cnd:BWC}
\end{enumerate}
\end{theorem}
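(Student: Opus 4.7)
The implication $(\ref{cnd:23})\Rightarrow(\ref{cnd:bound})$ is trivial; the substantive content lies in $(\ref{cnd:BWC})\Rightarrow(\ref{cnd:23})$ (algorithmic) and $(\ref{cnd:bound})\Rightarrow(\ref{cnd:BWC})$ (hardness). For the algorithmic direction I would first reduce to a core of $\Gamma$ --- on which $\supp(\Gamma)$ contains idempotent WNU operations of every arity $m\geq 3$ --- and then round any feasible SA$(3)$-solution $\{\lambda_i(\sigma)\}$ of an instance $I\in\VCSP(\Gamma)$ to an integral assignment of the same value. The plan is to mimic the Barto--Kozik bounded-width rounding for decision CSPs: iteratively symmetrise the pair and triple marginals by applying WNU polymorphisms in $\supp(\Gamma)$, producing at each step a new SA$(3)$-feasible solution of no greater objective. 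The ingredient needed in the valued setting is that every WNU in $\supp(\Gamma)$ lies in $\supp(\omega)$ for some $\omega\in\fpol(\Gamma)$, which witnesses non-increase of the objective at each symmetrisation step; the $(2,3)$-consistency enforced by SA$(3)$ would then force the refined marginals to concentrate on a single label per variable.

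For the hardness direction I would argue the contrapositive. If $\supp(\Gamma)$ contains no WNU of some arity, then by the algebraic theory of bounded width~\cite{Maroti08:weakly,LaroseZadori07:au,Barto14:jacm}, extended to fractional polymorphisms~\cite{cccjz13:sicomp,Kozik15:icalp}, after passing to a core and a subalgebra on a subset $S\subseteq D$ one obtains a pp-interpretation of the language of linear equations over some finite Abelian group into $\Gamma$. The plan is to verify, step by step, that each such reduction --- core restriction to $\Gamma[f(D)]$, pp-definitions, and pp-interpretations realised by local gadgets --- transports an SA$(k)$-solution into an SA$(ck)$-solution of the reduced instance of the same value, for some fixed $c$ depending on the gadgets but not on the instance size. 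Combined with the known integrality gaps for linear equations over Abelian groups that survive $\Omega(n)$ Sherali--Adams levels~\cite{Schoenebeck08:focs,Tulsiani09:stoc,Chan16:jacm}, this would rule out any constant level of SA for $\VCSP(\Gamma)$.

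The main obstacle is the gadget-preservation claim in the hardness direction. The analogous statement for the basic LP is well known, but at higher SA levels one has to build global marginals on sets of $k$ variables from marginals on the gadget variables in a way that is simultaneously consistent across overlapping gadgets --- and, unlike the equality-free pp-interpretations used in~\cite{Dalmau13:robust,Barto16:sicomp}, one must allow equality gadgets. I would treat the three reduction types (core, pp-definition, pp-interpretation) separately and, in each case, write down an explicit formula for the transported $\lambda_i(\sigma)$ as a projection or extension of the input assignment weights, and then verify~\eqref{sa:marginal} at the appropriate level $ck$ directly.
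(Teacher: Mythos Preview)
This theorem is not proved in the present paper; it is quoted from~\cite{tz17:sicomp} and used as a black box. So there is no proof here to compare against directly, but the paper's surrounding discussion and its reuse of the machinery from~\cite{tz17:sicomp} let me say something about your plan.

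Your hardness direction $(\ref{cnd:bound})\Rightarrow(\ref{cnd:BWC})$ is essentially the approach of~\cite{tz17:sicomp}: pass to a core, add constants, find a crisp language in $\langle\Gamma\rangle$ that still violates BWC (this paper's Lemmas~\ref{lem:killingf} and~\ref{lem:crispwnus} recapitulate that step), interpret $E_{\mathcal{G},3}$ via Theorem~\ref{thm:Agroup}, and then check that expressibility, equality, interpretation, $\opt/\feas$, and core-plus-constants each preserve SA-solvability up to a constant blow-up in the level. Your identification of the ``main obstacle'' --- building consistent level-$ck$ marginals across overlapping gadgets, including equality gadgets --- is exactly the content of~\cite[Lemma~6.1]{tz17:sicomp}, the SA analogue of Lemma~\ref{lem:las-reduc} here. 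So on this half you are on the right track and aligned with the source.

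Your algorithmic direction $(\ref{cnd:BWC})\Rightarrow(\ref{cnd:23})$ is where the proposal is thin. The sketch ``symmetrise the marginals by WNUs in $\supp(\Gamma)$, objective does not increase because they sit in some $\omega\in\fpol(\Gamma)$, and $(2,3)$-consistency then forces concentration on a single label'' does not go through as stated. A fractional polymorphism only guarantees that the \emph{expectation} over $\supp(\omega)$ does not increase the cost; it does not let you apply a single chosen WNU while controlling the objective, and iterating such averaging does not by itself collapse the marginals to point masses. The actual argument in~\cite{tz17:sicomp} is substantially more involved: it does not round SA$(3)$ directly via Barto--Kozik-style propagation, but rather establishes a valued analogue of bounded width by combining the algebraic structure coming from BWC with a careful analysis of the LP (and extracting an assignment is handled separately, cf.\ the remark after Theorem~\ref{thm:sa} pointing to~\cite[Section~3.6]{tz17:sicomp}). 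If you want to flesh this direction out, you will need a genuine mechanism that turns an SA$(3)$-optimal fractional solution into an integral one of equal value, not just a non-increasing symmetrisation.
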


In this follow-up work, we study the power of the Lasserre SDP hierarchy for
exact solvability of general-valued CSPs. As our main contribution (stated as
Theorem~\ref{thm:main}), we show that general-valued CSPs that are not solved by
a constant level of the Sherali-Adams LP hierarchy require linear levels of the 
Lasserre SDP hierarchy. As a direct corollary, the
results of Lee, Raghavendra, and Steurer~\cite{Lee15:stoc} imply that such
general-valued CSPs are not solved by \emph{any} polynomial-size SDP relaxation. 

In order to prove our result, we will strengthen the proof of the
implication $(i)\Longrightarrow (iii)$ of Theorem~\ref{thm:sa}.
The idea is to show that if
$\supp(\Gamma)$ violates the BWC, then $\Gamma$ can \emph{simulate}
linear equations in some Abelian group. It suffices to show that linear
equations require linear levels of the Lasserre SDP hierarchy and that
the simulation preserves exact solvability by the Lasserre SDP hierarchy (up to a
constant factor in the level of the hierarchy).
As discussed before, the former is actually known 
(in a stronger sense of inapproximability of linear
equations)~\cite{Grigoriev01:tcs,Schoenebeck08:focs,Tulsiani09:stoc,Chan16:jacm}
and will be discussed in Section~\ref{subsec:overview}. Our contribution is proving
the latter. 
While the simulation involves only local replacements via gadgets, it needs to
be done with care. In particular, we emphasise that the simulation involves
steps, such as going to the core and interpretations, which are commonly used in
the algebraic approach to CSPs but not in the literature on convex relaxations
and approximability of CSPs~\cite{Tulsiani09:stoc}. Indeed, the algebraic
approach to CSPs gives the right tools for the intuitive (but non-trivial
to capture formally) meaning of ``simulating equations''.

\subsection{Related work}

In our main result, Theorem~\ref{thm:main}, the BWC is required to hold, as in
Theorem~\ref{thm:sa}, for the support of the fractional
polymorphisms~\cite{cccjz13:sicomp} of the general-valued CSPs. This is a
natural requirement since polymorphisms do not capture the complexity of
general-valued CSPs but the fractional polymorphisms do
so~\cite{cccjz13:sicomp,Kolmogorov17:sicomp}.

The BWC was also shown~\cite{Dalmau13:robust,Barto16:sicomp} to capture
precisely the Max-CSPs that can be robustly approximated, as
conjectured by Guruswami and Zhou~\cite{Guruswami12:toc}. This work is similar to ours but
different. In particular, Dalmau and Krokhin showed~\cite{Dalmau13:robust} that various 
reductions preserve robust approximability of equations, and thus showing
that Max-CSPs not satisfying the BWC cannot be robustly approximated, assuming
P$\neq$NP and relying on H{\aa}stad's inapproximability results for linear
equations~\cite{Hastad01:jacm}. (Barto and
Kozik~\cite{Barto16:sicomp} then showed that Max-CSPs satisfying the BWC can be
robustly approximated.)
However, note that linear equations \emph{can} be solved exactly using Gaussian
elimination and thus this result is not applicable in our setting.
Our result, on the other hand, shows that various reductions preserve
exact solvability of equations by a \emph{particular} algorithm (the Lasserre
SDP hierarchy) independently of {P\,vs.\,NP}. Moreover, the pp-definitions and
pp-interpretations used in~\cite{Dalmau13:robust,Barto16:sicomp} were required
to be equality-free. We prove that our reductions are well-behaved without this
assumption. 

Our main result is incomparable with the results obtained by
Schoenebeck~\cite{Schoenebeck08:focs}, Tulsiani~\cite{Tulsiani09:stoc}, and
Chan~\cite{Chan16:jacm} in the context of (in)approximability.
On the one hand, our results capture exact solvability
rather than approximability. On the other hand, we give a stronger result as our
result applies to general-valued CSPs rather than only to Max-CSPs or
finite-valued CSPs.
General-valued CSPs are more expressive than their special cases Max-CSPs 
and finite-valued CSPs since general-valued CSPs also include decision CSPs as a special case and thus can
use ``hard'' or ``strict'' constraints. 
The results on Max-CSPs~\cite{Schoenebeck08:focs,Tulsiani09:stoc,Chan16:jacm}
were extended by (problem-specific) reductions to some problems (such as Vertex
Cover) which are not captured by Max-CSPs but are captured by general-valued
CSPs. 
Our results are not problem specific and apply to \emph{all} general-valued CSPs. In
particular, we give a \emph{complete} characterisation of which
\emph{general-valued} CSPs are solved exactly by the Lasserre SDP hierarchy.

Our results generalise some of the results of Dawar and Wang~\cite{Dawar17:lics} and
Atserias and Ochremiak~\cite{Atserias17:icalp}. In particular, using
definability in counting logics, Dawar and Wang have established our main result
in the special case of $\mathbb{Q}$-valued languages, i.e., for finite-valued
CSPs~\cite{Dawar17:lics}. Moreover, using tools from proof complexity, Atserias
and Ochremiak have established (among other things) our main result in the
special case of $\{0,\infty\}$-valued languages, i.e., for (decision)
CSPs~\cite{Atserias17:icalp}.

\section{Preliminaries}
\label{sec:prelims}

\subsection{General-valued CSPs}
We first describe the framework of general-valued constraint satisfaction problems (VCSPs). 
Let $\qq=\mathbb{Q}\cup\{\infty\}$ denote the set of rational numbers extended
with positive infinity. Throughout the paper, let $D$ be a fixed finite set of
size at least two, also called a \emph{domain}; we call the elements of $D$
\emph{labels}.
We denote by $\left[n\right]$ the set $\{1,\ldots,n\}$.

\begin{definition}\label{def:wrel}
An $r$-ary \emph{weighted relation} over $D$ is a mapping $\phi:D^r\to\qq$. We
write $\ar(\phi)=r$ for the arity of $\phi$.
\end{definition}

A weighted relation $\phi \colon D^r \to \{0,\infty\}$ can be seen as the
(ordinary) relation $\{ \tup{x} \in D^r \mid \phi(\tup{x}) = 0 \}$. We will use
both viewpoints interchangeably.

For any $r$-ary weighted relation $\phi$, we denote by $\feas(\phi)=\{\tup{x}\in
D^r \mid \phi(\tup{x})<\infty\}$ the underlying $r$-ary \emph{feasibility
relation}, and by $\opt(\phi)=\{\tup{x}\in\feas(\phi) \mid \forall\tup{y}\in
D^r: \phi(\tup{x})\leq\phi(\tup{y})\}$ the $r$-ary \emph{optimality relation},
which contains the tuples on which $\phi$ is minimised.

\begin{definition}
Let $V=\{x_1,\ldots, x_n\}$ be a set of variables. A \emph{valued constraint} over $V$ is an expression
of the form $\phi(\tup{x})$ where $\phi$ is a weighted relation and $\tup{x}\in V^{\ar(\phi)}$.
The tuple $\tup{x}$ is called the \emph{scope} of the constraint.
\end{definition}

\begin{definition}\label{def:vcsp}
An instance $I$ of the \emph{valued constraint satisfaction problem} (VCSP) is specified
by a finite set $V=\{x_1,\ldots,x_n\}$ of variables, a finite set $D$ of labels,
and an \emph{objective function} $\phi_\inst$
expressed as follows:
\begin{equation*}
\phi_\inst(x_1,\ldots, x_n)=\sum_{i=1}^q{\phi_i(\tup{x}_i)},
\end{equation*}
where each $\phi_i(\tup{x}_i)$, $1\le i\le q$, is a valued constraint. 
Each constraint may appear multiple times in~$\inst$.
An \emph{assignment} to $\inst$ is a map $\sigma \colon V \to D$.
The goal is to find an assignment that minimises the objective function.
\end{definition}

For a VCSP instance $I$, we write $\vcspval(\inst,\sigma)$ for
$\phi_\inst(\sigma(x_1), \dots, \sigma(x_n))$, and $\vcspopt(\inst)$ for the
minimum of $\vcspval(\inst,\sigma)$ over all assignments $\sigma$.

An assignment $\sigma$ with $\vcspval(\inst,\sigma)<\infty$ is called
\emph{satisfying}.
An assignment $\sigma$ with $\vcspval(\inst,\sigma)=\vcspopt(\inst)$ is called \emph{optimal}. 

A VCSP instance $I$ is called \emph{satisfiable} if there is a satisfying
assignment to $I$. Constraint satisfaction problems (CSPs) are a special case of VCSPs with (unweighted) relations
with the goal to determine the existence of a satisfying assignment.

A \emph{general-valued constraint language} 
(or just a \emph{constraint language} for short) 
over $D$ is a set of weighted relations over $D$. As is common in the (V)CSP
literature, we will focus on constraint languages of \emph{finite} size.
We denote by $\VCSP(\Gamma)$
the class of all VCSP instances in which the weighted relations are all
contained in $\Gamma$. A constraint language $\Gamma$ is called \emph{crisp} if
$\Gamma$ contains only (unweighted) relations. For a crisp language $\Gamma$,
$\VCSP(\Gamma)$ is equivalent to the well-studied (decision)
$\CSP(\Gamma)$~\cite{Hell08:survey}.
We remark that for $\{0,1\}$-valued
constraint languages, $\VCSP(\Gamma)$ is also known as $\MinCSP(\Gamma)$ or
$\MaxCSP(\Gamma)$ (since for exact solvability these are equivalent).

For a constraint language $\Gamma$, let $\ar(\Gamma)$ denote $\max \{
\ar(\phi) \mid \phi \in \Gamma \}$.

\begin{example}\label{ex:langs}
Let $D=\{0,1\}$. We define several weighted relations. 
\begin{itemize}
\item $\phi_{\sf cut}(x,y)=1$ if $x+y=0 \pmod{2}$ and $\phi_{\sf cut}(x,y)=0$ otherwise.
\item $\phi_{\sf mc}(x,y)=1$ if $x+y=1 \pmod{2}$ and $\phi_{\sf mc}(x,y)=0$ otherwise.
\item For $a\in D$, $c_a(x)=0$ if $x=a$ and $c_a(x)=\infty$ otherwise.
\item For $a\in D$, $R_a(x,y,z)=0$ if $x+y+z=a \pmod{2}$ and $R_a(x,y,z)=\infty$ otherwise.
\end{itemize}
Let $\Gamma_{\sf cut}=\{\phi_{\sf cut}, c_0, c_1\}$, 
$\Gamma_{\sf mc}=\{\phi_{\sf mc}\}$, and $\Gamma_{\sf eq}=\{R_0, R_1\}$.
Then, $\VCSP(\Gamma_{\sf cut})$ corresponds to the $(s,t)$-Min-Cut problem, 
$\VCSP(\Gamma_{\sf mc})$ corresponds to the Min-UnCut problem, 
and finally $\VCSP(\Gamma_{\sf eq})$ corresponds to the feasibility problem for 
systems of linear questions in three variables over ${\mathbb Z}_2$.
\end{example}

\subsection{Fractional polymorphisms} 
We next define fractional polymorphisms, which are algebraic properties known to
capture the computational complexity of the underlying class of VCSPs. 

Given an $r$-tuple $\tup{x}\in D^r$, we denote its $i$th entry by $\tup{x}[i]$ for $1\leq i\leq r$.
A mapping $f \colon D^m\rightarrow D$ is called an $m$-ary \emph{operation} on $D$; $f$ is
\emph{idempotent} if $f(x,\ldots,x)=x$.
We apply an $m$-ary operation $f$ to $m$ $r$-tuples
$\tup{x_1},\ldots,\tup{x_m}\in D^r$ coordinatewise, that is, 
$f(\tup{x_1},\ldots,\tup{x_m})=(f(\tup{x_1}[1],\ldots,\tup{x_m}[1]),\ldots,f(\tup{x_1}[r],\ldots,\tup{x_m}[r]))$.

\begin{definition} \label{def:pol}
Let $\phi$ be a weighted relation on $D$ and let $f$ be an $m$-ary operation on $D$.
We call $f$ a \emph{polymorphism of $\phi$} if,
for any $\tup{x_1},\ldots,\tup{x_m} \in \rdom(\phi)$,
we have that $f(\tup{x_1},\ldots,\tup{x_m})\in\rdom(\phi)$.

For a constraint language $\Gamma$,
we denote by $\pol(\Gamma)$ the set of all operations which are polymorphisms of all 
$\phi \in \Gamma$. We write $\pol(\phi)$ for $\pol(\{\phi\})$.
\end{definition}

The intuition behind polymorphisms is that if $\pol(\Gamma)$ contains only
``trivial'' operations (such as projections, cf. Example~\ref{ex:fpols}) then
checking for a satisfiable solution to an instance of $\VCSP(\Gamma)$ is NP-hard,
whereas if $\pol(\Gamma)$ contains a ``non-trivial'' operation then this can be
done in polynomial time. This intuition was formalised in the algebraic
dichotomy conjecture~\cite{Bulatov05:classifying} recently proved
in~\cite{Bulatov17:focs,Zhuk17:focs}.

The following notions are known to capture the complexity of general-valued constraint
languages~\cite{cccjz13:sicomp,Kozik15:icalp} and will also be important in this
paper.
A probability distribution $\omega$ over the set of $m$-ary operations on $D$ is
called an $m$-ary \emph{fractional operation}. For a
fractional operation $\omega$, ``$f\sim\omega$''
means that $f$ is a random operation (of the same arity as $\omega$) drawn
according to the distribution $\omega$. We define $\supp(\omega)$ to
be the set of operations assigned positive probability by $\omega$. We denote by
$\avg$ the average operator; i.e.,
$\avg\{a_1,\ldots,a_m\}=(1/m)\sum_{i=1}^ma_i$.

\begin{definition} \label{def:wp} 
Let $\phi$ be a weighted relation on $D$ and
let $\omega$ be an $m$-ary fractional operation on $D$.
We call $\omega$ a \emph{fractional polymorphism of $\phi$} if 
$\supp(\omega)\subseteq\pol(\phi)$ and for any
$\tup{x}_1,\ldots,\tup{x}_m \in \rdom(\phi)$, we have
\begin{equation*}
\E_{f\sim \omega}[\phi(f(\vec{x}_1,\ldots,\vec{x}_m))]\ \le\
\avg\{\phi(\vec{x}_1),\ldots,\phi(\vec{x}_m)\}.
\label{eq:wpol}
\end{equation*}
For a general-valued constraint language $\Gamma$, we denote by $\fpol(\Gamma)$ the set of all
fractional operations which are fractional polymorphisms of all weighted
relations $\phi \in \Gamma$. We write $\fpol(\phi)$ for $\fpol(\{\phi\})$.
\end{definition}

In case of fractional polymorphisms, the important operations are those that are
assigned positive probability.

\begin{definition}
Let $\Gamma$ be a general-valued constraint language on $D$. We define
\begin{equation*}
\supp(\Gamma)\ =\ \bigcup_{\omega \in \fpol(\Gamma)} \supp(\omega).
\end{equation*}
\end{definition}

The intuition behind fractional polymorphisms is that if $\supp(\Gamma)$
contains only ``trivial'' operations then finding an optimal solution to an
instance of $\VCSP(\Gamma)$ is NP-hard, whereas if $\supp(\Gamma)$ contains a
``non-trivial'' operation then this can be done in polynomial time. This
intuition was formalised in~\cite{cccjz13:sicomp,Kozik15:icalp} and proved
in~\cite{Kolmogorov17:sicomp}. We now give some examples.

\begin{example}\label{ex:fpols}
Let $D=\{0,1\}$ and recall the constraint languages $\Gamma_{\sf cut}$, $\Gamma_{\sf
mc}$, and $\Gamma_{\sf eq}$ defined in Example~\ref{ex:langs}. 

Consider the two binary operations $\min$ and $\max$ on $D$ that return the
smaller and the larger of its two arguments, respectively. The constraint
language $\Gamma_{\sf cut}$ admits $\omega_{\sf sub}$ as a fractional
polymorphism, where $\omega_{\sf sub}(\min)=\omega_{\sf
sub}(\max)=\frac{1}{2}$. In fact, the set of all weighted relations that
admit $\omega_{\sf sub}$ as a fractional polymorphism is precisely the
class of \emph{submodular} functions.
Note that both $\min$ and $\max$ are binary commutative operations.
By~\cite[Corollary~6]{ktz15:sicomp}, the fact that $\supp(\Gamma_{\sf cut})$
contains a binary commutative operation implies that 
$\VCSP(\Gamma_{\sf cut})$ is solved by the first level of the Sherali-Adams LP hierarchy.

Since $\VCSP(\Gamma_{\sf mc})$ is essentially the problem Min-UnCut,
it is NP-hard.
This fact can also be deduced from looking at the binary fractional polymorphisms of
$\Gamma_{\sf mc}$.
For $i\in\{1,2\}$, we denote by $\pi_i$ the binary operation that returns its
$i$th argument (these are known as projections). Also, for
$i\in\{1,2\}$, we denote by $\pi'_i$ the binary operation defined by
$\pi'_i(0,0)=1$, $\pi'_i(1,1)=0$, and $\pi'_i(x,y)=\pi_i(x,y)$ for
$x\neq y$. 
For any $0\leq p\leq \frac{1}{2}$, the binary fractional operation
$\omega_p$ defined by $\omega_p(\pi_1)=\omega_p(\pi_2)=p$ and
$\omega_p(\pi'_1)=\omega_p(\pi'_2)=\frac{1}{2}-p$ is a fractional
polymorphism of $\Gamma_{\sf mc}$.
It is not hard to show that all fractional polymorphisms of $\Gamma_{\sf mc}$
are of this form, and hence there is no binary commutative operation in $\supp(\Gamma_{\sf mc})$.
It then follows from~\cite{ktz15:sicomp} that $\VCSP(\Gamma_{\sf mc})$ is not
solved by the first level of the Sherali-Adams LP hierarchy, and by the
results in~\cite{tz16:jacm} that $\VCSP(\Gamma_{\sf mc})$ is NP-hard.

Finally, let $m$ denote the ternary operation defined by $m(x,y,z)=x+y+z
\pmod{2}$. The constraint language $\Gamma_{\sf eq}$ admits $m$ as a
polymorphism and thus any instance of $\VCSP(\Gamma_{\sf eq})$ can be
solved in polynomial time~\cite{Jeavons97:jacm}. 
However, $\pol(\Gamma_{\sf eq})$ does not contain any weak near-unanimity operation
of arity 3 (defined in Section~\ref{sec:sa}).
It therefore follows from Thereom~\ref{thm:main} of this paper that
$\VCSP(\Gamma_{\sf eq})$
requires linear levels of the Lasserre SDP hierarchy.
\end{example}

\subsection{Expressibility, interpretability, and simulation}

In this section we formally define the various types of gadget constructions
needed to establish our main result. We also introduce the important notion of cores.

\begin{definition} \label{def:expres}
We say that an $m$-ary weighted relation $\phi$ is \emph{expressible} over a
general-valued constraint
language $\Gamma$ if there exists an instance $\inst$ of $\VCSP(\Gamma)$ 
with variables $x_1,\ldots,x_m,v_1,\ldots,v_p$
such that
\begin{equation*}
\phi(x_1,\ldots,x_m) = \min_{v_1,\dots,v_p} \phi_I(x_1,\dots,x_m,v_1,\dots,v_p).
\end{equation*}
\end{definition}

For a fixed set $D$, let $\eq{D}$ denote the binary equality relation $\{ (x,x)
\mid x \in D \}$.
We denote by $\langle \Gamma \rangle$ the set of weighted relations obtained by
taking the closure of $\Gamma \cup \{\eq{D}\}$, where $D$ is the domain of
$\Gamma$, under expressibility, the $\feas$ and $\opt$ operations, scaling by
nonnegative rational constants, and addition of rational constants.

\begin{definition}\label{def:inter}
Let $\Gamma$ and $\Delta$ be general-valued constraint languages on domain $D$ and $D'$, respectively.
We say that
$\Delta$ \emph{ has an interpretation in} $\Gamma$ with parameters $(d,S,h)$ if there exists a
$d \in \N$, a set $S\subseteq D^d$, and a surjective map $h:S\to D'$ such
that $\langle \Gamma \rangle$ contains the following weighted relations:
\begin{itemize}
\item $\phi_S \colon D^d \to \qq$ defined by
$\phi_S(\tup{x}) = 0$ if $\tup{x} \in S$ and $\phi_S(\tup{x}) = \infty$ otherwise;
\item $h^{-1}(\eq{D'})$; and
\item $h^{-1}(\phi_i)$, for every weighted relation $\phi_i \in \Delta$, 
\end{itemize}
where $h^{-1}(\phi_i)$, for an $m$-ary weighted relation $\phi_i$,
is the $dm$-ary weighted relation on $D$ defined by
$h^{-1}(\phi_i)(\tup{x}_1,\dots,\tup{x}_m) = \phi_i(h(\tup{x}_1),\dots,h(\tup{x}_m))$,
for all $\tup{x}_1,\dots,\tup{x}_m\in~S$.
\end{definition}

It follows from Definition~\ref{def:inter} that interpretations compose.

\begin{remark}
A weighted relation being expressible over $\Gamma \cup \{ \eq{D} \}$ is the
analogue of a relation being definable by a \emph{primitive positive (pp)}
formula (using existential quantification and conjunction) over a relational
structure with equality. Indeed, when $\Gamma$ is crisp, the two notions
coincide. Also, for a crisp $\Gamma$ the notion of an interpretation coincides
with the notion of a \emph{pp-interpretation} for relational
structures~\cite{Bodirsky08:survey}.
\end{remark}

For a subset of the domain $S\subseteq D$, we define the restriction of a language $\Gamma$ on $S$ as follows.

\begin{definition}
Let $\Gamma$ be a general-valued constraint language with domain $D$ and let $S \subseteq D$.
The \emph{sub-language $\Gamma[S]$ of $\Gamma$ induced by $S$} is the constraint
language defined on domain $S$ and containing the restriction of every weighted relation
$\phi\in\Gamma$ onto $S$.
\end{definition}

Appropriate notions of cores have played an important role in the complexity classification of CSPs~\cite{Bulatov05:classifying,Bulatov17:focs,Zhuk17:focs} and VCSPs~\cite{Kozik15:icalp,Kolmogorov17:sicomp}.
We define a core based on the unary operations in the support of a language, as is done in~\cite{tz17:sicomp,Kolmogorov17:sicomp}.

\begin{definition}\label{def:core}
A general-valued constraint language $\Gamma$ is \emph{a core} if all unary operations in
$\supp(\Gamma)$ are bijections.
A general-valued constraint language $\Gamma'$ is a \emph{core of $\Gamma$} if $\Gamma'$ is a
core and $\Gamma' = \Gamma[f(D)]$ for some unary $f \in \supp(\Gamma)$.
\end{definition}

We can now give a formal definition of the notion of \emph{simulation} used
in the statement of our main result, Theorem~\ref{thm:main}.
Recall from Example~\ref{ex:langs} that $c_a$ denotes the constant unary
relation containing the label $a$.
Let $\mathcal{C}_D = \{ c_a \mid a \in D \}$ be the set of all constant unary relations on the set $D$. 

\begin{definition}
Let $\Gamma'$ be a core of a general-valued constraint language $\Gamma$ on domain $D' \subseteq D$.
We say that \emph{$\Gamma$ can simulate} a general-valued constraint language $\Delta$ if 
$\Delta$ has an interpretation in $\Gamma' \cup \mathcal{C}_{D'}$.
\end{definition}

We note that simulation is known to preserve polynomial-time
solvability~\cite{Bulatov05:classifying,cccjz13:sicomp,tz16:jacm,Kozik15:icalp}.
We will show later, in Theorem~\ref{thm:reductions}, that simulation
additionally preserves exact solvability in the Lasserre SDP hierarchy, defined
in Section~\ref{sec:lass}, up to a constant factor in the level of the
hierarchy.

\section{Lower Bounds on LP and SDP Relaxations}

Every VCSP instance has a natural LP relaxation known as the \emph{basic LP
relaxation} (BLP). The power of BLP for exact solvability of $\CSP(\Gamma)$,
where $\Gamma$ is a crisp constraint language, has been characterised (in terms
of the polymorphisms of $\Gamma$) in~\cite{kun12:itcs}. The power of BLP for
exact solvability of $\VCSP(\Gamma)$, where $\Gamma$ is a general-valued
constraint language, has been characterised (in terms of the fractional
polymorphisms of $\Gamma$) in~\cite{ktz15:sicomp}.

The Sherali-Adams LP hierarchy~\cite{Sherali1990} gives a systematic way of
strengthening the BLP relaxation. BLP being the first level, the $k$th level of
the Sherali-Adams LP hierarchy adds to the BLP linear constraints satisfied by
the integral solutions and involving at most $k$ variables. One can think of the
variables of the $k$th level as probability distributions over assignments to at
most $k$ variables of the original instance.

The Lasserre SDP hierarchy~\cite{Lasserre02:sjopt} is a significant
strengthening of the Sherali-Adams LP hierarchy: real-valued variables are
replaced by vectors from a finite-dimensional real vector space. Intuitively,
the norms of these vectors again induce probability distributions over
assignments to at most $k$ variables of the original instance (for the $k$th
level of the Lasserre SDP hierarchy). Since these distributions have to come
from inner products of vectors, this is a tighter relaxation. In particular, it
is known that the $k$th level of the Lasserre SDP hierarchy is at least as tight
as the $k$th level of the Sherali-Adams LP hierarchy~\cite{Laurent03:mor}. 

It is well known that for a problem with $n$ variables, the $n$th levels of both
of these two hierarchies are exact, i.e., the solutions to the $n$th levels are
precisely the convex combinations of the integral solutions. However, it is not
clear how to solve the $n$th levels in polynomial time. In general, taking an
$n$-variable instance of $\VCSP(\Gamma)$, the $k$th
level of both hierarchies can be solved in time 
$L\cdot n^{O(k)}$, where $L$ is the length of a binary encoding of the input.
In particular, this is polynomial for a fixed $k$.

In this section, we will define the Sherali-Adams LP and the Lasserre SDP
hierarchies and state known and new results regarding their power and
limitations for exact solvability of general-valued CSPs.

\subsection{Sherali-Adams LP Hierarchy}
\label{sec:sa}

Let $I$ be an instance of the VCSP with $\phi_\inst(x_1,\dots,x_n) =
\sum_{i=1}^q \phi_i(\tup{x}_i)$, $X_i \subseteq V = \{x_1, \dots, x_n\}$ and
$\phi_i \colon D^{\ar(\phi_i)} \to \qq$. We will use the notational convention
to denote by $X_i$ the \emph{set} of variables occurring in the scope
$\tup{x}_i$.

A \emph{null constraint} on a set $X \subseteq V$ is a constraint with a weighted
relation identical to $0$.
It is sometimes convenient to add null constraints to a VCSP instance as
placeholders, to ensure that they have scopes where required, even if these relations may not
necessarily be members of the corresponding constraint language $\Gamma$. In
order to obtain an equivalent instance that is formally in $\VCSP(\Gamma)$, the
null constraints can simply be dropped, as they are always satisfied and do not
influence the value of the objective function.

Let $k$ be an integer. The $k$th level of the Sherali-Adams LP
hierarchy~\cite{Sherali1990}, henceforth called the SA$(k)$-relaxation of $I$, is given by
the following linear program.
Ensure that for every non-empty $X
\subseteq V$ with $|X| \leq k$ there is some constraint $\phi_i(\tup{x}_i)$
with $X_i=X$, possibly by adding null constraints.
The variables of the SA$(k)$-relaxation, given in Figure~\ref{fig:lp}, are $\lambda_{i}(\sigma)$ for every $i \in \left[q\right]$ and assignment $\sigma \colon X_i \to D$.
We slightly abuse notation by writing $\sigma \in \feas(\phi_i)$ for $\sigma
\colon X_i \to D$ such that $\sigma(\tup{x}_i) \in \feas(\phi_i)$. 
\begin{figure*}[t]
\begin{alignat}{2}
\text{minimise}\ &  \sum_{i = 1}^q \sum_{\sigma \in \feas(\phi_i)}  \lambda_{i}(\sigma) \phi_i(\sigma(\tup{x}_i)) \nonumber\\
\text{subject to}\ & & \nonumber\\
& \lambda_{i}(\sigma) \geq 0 &\ &\ \forall i \in \left[q\right], \sigma \colon
X_i \to D \label{sa:nonnegative} \tag{S1}\\
& \lambda_{i}(\sigma) = 0 &\ &\ \forall i \in \left[q\right], \sigma \colon X_i
\to D, \sigma(\tup{x}_i) \not\in \feas(\phi_i) \label{sa:infeas}\tag{S2} \\
& \sum_{\sigma \colon X_i \to D} \lambda_{i}(\sigma) = 1 &\ &\ \forall
i\in\left[q\right] \label{sa:sum1}\tag{S3} \\
&  \sum_{\substack{\sigma \colon X_i \to D\\ \Crestrict{\sigma}{X_j} = \tau}} \lambda_i(\sigma) = \lambda_{j}(\tau) &\ &\ \forall i, j \in \left[q\right] : X_j \subseteq X_i, \left|X_j\right| \leq k, \tau \colon X_j \to D \nonumber \label{sa:marginal}\tag{S4}
\end{alignat}
\vspace*{-1em}
\caption{The $k$th level of the Sherali-Adams LP hierarchy, SA$(k)$.}\label{fig:lp}
\end{figure*}

We write $\lpopt{}{}(\inst,k)$ for the optimal value of an LP-solution to the
SA$(k)$-relaxation of $\inst$.

\begin{definition}
Let $\Gamma$ be a general-valued constraint language.
We say that $\VCSP(\Gamma)$ is \emph{solved by the
$k$th level of the Sherali-Adams LP hierarchy}
if for every instance $I$ of $\VCSP(\Gamma)$ we have
$\vcspopt(\inst)=\lpopt{}{}(I,k)$.
\end{definition}

We now describe the main result from~\cite{tz17:sicomp}, which captures the
power of Sherali-Adams LP relaxations for exact optimisation of VCSPs.

An $m$-ary idempotent operation $f \colon D^m\to D$ is called a 
\emph{weak near-unanimity} (WNU) operation if, for all $x,y\in D$,
\begin{equation}\label{pseudownu}
f(y,x,x,\ldots,x)=f(x,y,x,x,\ldots,x)=\cdots =f(x,x,\ldots,x,y).\tag{WNU}
\end{equation}

\begin{definition}\label{def:bwc} 
A set of operations satisfies the \emph{bounded width condition (BWC)}
if it contains a (not necessarily idempotent) $m$-ary operation satisfying the identities~(\ref{pseudownu}), for every $m\geq 3$.
\end{definition}

Recall from Section~\ref{sec:intro} Theorem~\ref{thm:sa}, which characterises the power of constant levels of
the Sherali-Adams LP hierarchy for exact solvability of VCSPs in terms of the
BWC.

\begin{remark}
\hfill
\begin{enumerate}[(i)]
\item
While it is not clear from the definition that condition~(iii) of
Theorem~\ref{thm:sa} is decidable, it is known to be equivalent to a decidable
condition. 
Briefly, let $\Gamma'$ be a core of $\Gamma$ defined on $D'\subseteq D$.
By~\cite[Lemma~3.7]{tz17:sicomp}, $\Gamma$ satisfies the BWC if and only if
$\Gamma'\cup\mathcal{C}_{D'}$ satisfies the BWC.
By~\cite[Theorem~2.8]{Kozik15:au}, $\Gamma'\cup\mathcal{C}_{D'}$ satisfies the
BWC if and only there are a ternary WNU $f$ and a $4$-ary WNU
$g$ in $\supp(\Gamma'\cup\mathcal{C}_{D'})$ satisfying
$f(y,x,x)=g(y,x,x,x)$ for all $x,y \in D'$. Finally, checking for the existence of such
operations can be done using a linear program.

\item
It is possible to obtain a
solution to an instance $\inst$ of $\VCSP(\Gamma)$ from the optimal value of the
SA$(3)$-relaxation of $I$~\cite[Section~3.6]{tz17:sicomp}.
\item Theorem~\ref{thm:sa} says that if
$\supp(\Gamma)$ violates the BWC then $\VCSP(\Gamma)$ requires more than a
constant level of the Sherali-Adams LP hierarchy for exact solvability.
The proof in~\cite{tz17:sicomp} actually shows that 
in this case
$\Omega(\sqrt{n})$ levels are required for exact solvability of $n$-variable instances of
$\VCSP(\Gamma)$.
\end{enumerate}
\end{remark}

\subsection{Lasserre SDP Hierarchy}\label{sec:lass}

Let $I$ be an instance of the VCSP with $\phi_\inst(x_1,\dots,x_n) = \sum_{i=1}^q
\phi_i(\tup{x}_i)$, $X_i \subseteq V = \{x_1, \dots, x_n\}$ and $\phi_i \colon
D^{\ar(\phi_i)} \to \qq$. 
For $\sigma_i \colon X_i \to D$ and $\sigma_j \colon X_j \to D$, if
$\Crestrict{\sigma_i}{X_i\cap X_j} = \Crestrict{\sigma_j}{X_i\cap X_j}$
then we write $\sigma_i \circ \sigma_j \colon (X_i \cup X_j) \to D$ for
the assignment defined by $\sigma_i \circ \sigma_j(x)=\sigma_i(x)$ for $x\in
X_i$ and $\sigma_i \circ \sigma_j(x)=\sigma_j(x)$ otherwise. 

Let $k$ be an integer with $k\geq\max_i(\ar(\phi_i))$.\footnote{It also makes
sense to consider relaxations with $k<\max_i(\ar(\phi_i))$, in particular for
positive (algorithmic) results, such as the implication $(iii)\Rightarrow (ii)$
in Theorem~\ref{thm:sa}. For our main (impossibility) result, we will be
interested in $k$ which is linear in the number of variables of $I$.} The
$k$th level of the Lasserre SDP hierarchy~\cite{Lasserre01:jo},
henceforth called the $\Las(k$)-relaxation of $I$, is given by the following semidefinite
program (we follow the presentation from~\cite{Tulsiani09:stoc}).
Ensure that for every subset (including the empty set) $X \subseteq V$ with $|X| \leq k$ there is
some constraint $\phi_i(\tup{x}_i)$ with $X_i=X$, possibly by adding null
constraints.
The vector variables of the $\Las(k)$-relaxation, given in Figure~\ref{fig:sdp},
are 
$\blambda_i(\sigma)\in \mathbb{R}^t$ for every $i \in \left[q\right]$ and
assignment $\sigma \colon X_i \to D$. Here $t$ is the dimension of the real
vector space.\footnote{Typically, $t=(nd)^{O(k)}$ for an instance with $n$
variables over a domain of size $d$.}
We write $\blambda_0$ as a shorthand for $\blambda_i(\emptyset)$
where $i$ is the index for which $X_i=\emptyset$.

\begin{figure*}[t] 
\begin{alignat}{2}
\text{minimise}\ & \sum_{i = 1}^q \sum_{\sigma \in \feas(\phi_i)}||\blambda_{i}(\sigma)||^2 \phi_i(\sigma(\tup{x}_i)) \nonumber\\
\text{subject to}\ & & \nonumber\\
& ||\blambda_0|| = 1 \label{las:1} \tag{L1} \\
& \langle \blambda_i(\sigma_i), \blambda_j(\sigma_j) \rangle \geq 0 &\ &\
\forall i,j \in \left[q\right], \sigma_i \colon X_i \to D, \sigma_j \colon X_j \to D~\label{las:nonnegative} \tag{L2} \\
& ||\blambda_{i}(\sigma)||^2 = 0 &\ &\ \forall i \in \left[q\right], \sigma
\colon X_i \to D, \sigma(\tup{x}_i) \not\in \feas(\phi_i) \label{las:infeas} \tag{L3} \\
& \sum_{a\in D}||\blambda_{i}(a)||^2 = 1 &\ &\ \forall i \mbox{ with } |X_i|=1 \label{las:var} \tag{L4} \\
& \langle \blambda_i(\sigma_i), \blambda_j(\sigma_j) \rangle = 0 &\ &\ \forall i,j \in
\left[q\right], \sigma_i \colon X_i \to D, \sigma_j \colon X_j \to D \label{las:cons} \tag{L5} \\
& &\ &\ \Crestrict{\sigma_i}{X_i\cap X_j} \neq \Crestrict{\sigma_j}{X_i \cap X_j} \nonumber \\ 
& \langle \blambda_i(\sigma_i), \blambda_j(\sigma_j) \rangle = \langle
\blambda_{i'}(\sigma_{i'}), \blambda_{j'}(\sigma_{j'}) \rangle &\ &\ \forall
i,j,i',j' \in \left[q\right], X_i \cup X_j = X_{i'} \cup X_{j'} \label{las:split} \tag{L6} \\
& &\ &\ \sigma_i \colon X_i \to D, \sigma_j \colon X_j \to D, \sigma_{i'} \colon X_{i'} \to D \nonumber \\
& &\ &\ \sigma_{j'} \colon X_{j'} \to D, \sigma_i \circ \sigma_j = \sigma_{i'} \circ \sigma_{j'} \nonumber
\end{alignat}
\vspace*{-1em}
\caption{The $k$th level of the Lasserre SDP hierarchy, $\Las(k)$.}\label{fig:sdp}
\end{figure*}

For any fixed $k$ and any $t$ polynomial in the size of $\inst$, the $\Las(k)$-relaxation of $\inst$ is of
polynomial size in terms of $\inst$ and can be solved in polynomial
time~\cite{Gartner2012approximation}.\footnote{
Under technical assumptions which are satisfied by the Lasserre relaxation, 
SDPs can be solved approximately;
for any $\epsilon$ there is an algorithm that given an SDP returns
vectors for which the objective function is at most $\epsilon$ away from the
optimum value and the running time is polynomial in the input size and
$\log(1/\epsilon)$~\cite{Vandenberghe96:siam,Gartner2012approximation}. 
For any language $\Gamma$ of \emph{finite} size there is $\epsilon=\epsilon(\Gamma)$ such that solving the
SDP up to an additive error of $\epsilon$ suffices for exact solvability. For instance, take
$\epsilon$ such that
$\epsilon<\min_{\phi\in\Gamma}\min_{\vec{x},\vec{y}\in\feas(\phi),\phi(\vec{x})\neq\phi(\vec{y})}|\phi(\vec{x})-\phi(\vec{y})|$.
Since this paper deals with \emph{impossibility} results these matters are not
relevant but we mention it here for completeness.}
Note that $k$ may not necessarily be constant but it could depend on $n$, the number
of variables of $I$.

We write $\sdpval(\inst,\blambda,k)$ for the value of the SDP-solution $\blambda$ to the
$\Las(k)$-relaxation of $\inst$, and $\sdpopt{}{}(\inst,k)$ for its optimal value.

\begin{definition}
Let $\Gamma$ be a general-valued constraint language.
We say that $\VCSP(\Gamma)$ 
is \emph{solved by the $k$th} \emph{level of the Lasserre SDP hierarchy}
if for every instance $\inst$ of $\VCSP(\Gamma)$ we have
$\vcspopt(\inst)=\sdpopt{}{}(\inst,k)$.
\end{definition}

We say that an instance $\inst$ of $\VCSP(\Gamma)$ is a \emph{gap instance for}
the $k$th level of the Lasserre SDP hierarchy if
$\sdpopt{}{}(\inst,k)<\vcspopt(\inst)$.

\begin{definition}\label{def:linear}
Let $\Gamma$ be a general-valued constraint language.
We say that $\VCSP(\Gamma)$ \emph{requires linear levels} of the Lasserre SDP hierarchy
if there is a constant $0<c<1$ such that for all sufficiently large $n$ there
is an $n$-variable gap instance $\inst_n$ of $\VCSP(\Gamma)$ for $\Las(\lfloor
cn\rfloor)$.
\end{definition}

\subsection{Main Results}
\label{subsec:main}

Let $\mathcal{G}$ be an Abelian group over a finite set $G$ and let $r \geq 1$ be an integer.
Denote by $E_{\mathcal{G},r}$ the crisp constraint language over domain $G$ with,
for every $a \in G$, and $1 \leq m \leq r$, a relation 
$R^m_a = \{ (x_1, \dots, x_m) \in G^m \mid x_1 + \dots + x_m = a \}$.

We are now ready to state our main results.

\begin{theorem}\label{thm:main}
Let $\Gamma$ be a general-valued constraint language of finite size. The
following are equivalent:
\begin{enumerate}[(i)]
\item $\VCSP(\Gamma)$ requires linear levels of the Lasserre SDP hierarchy. \label{cndmain:las}
\item $\Gamma$ can simulate $E_{\mathcal{G},3}$ for some non-trivial Abelian group $\mathcal{G}$. \label{cndmain:equations}
\item $\supp(\Gamma)$ violates the BWC.  \label{cndmain:BWC}
\end{enumerate}
\end{theorem}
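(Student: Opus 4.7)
The plan is to close the cycle (i) $\Rightarrow$ (iii) $\Rightarrow$ (ii) $\Rightarrow$ (i). The implication (i) $\Rightarrow$ (iii) is immediate from Theorem~\ref{thm:sa} together with Laurent's comparison~\cite{Laurent03:mor} of the two hierarchies: if $\supp(\Gamma)$ satisfies the BWC, then $\VCSP(\Gamma)$ is solved by a constant level of the Sherali--Adams hierarchy, hence by the same constant level of Lasserre, so condition (i) fails.

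For (iii) $\Rightarrow$ (ii), which is the main technical implication, I would invoke the algebraic toolkit. First, pass to a core $\Gamma'$ of $\Gamma$; by Definition~\ref{def:core} this is $\Gamma[f(D)]$ for some unary $f\in\supp(\Gamma)$, and one checks $\supp(\Gamma')$ still violates the BWC. Second, use the known algebraic characterisation of BWC-failure: if the core $\Gamma'$ has no WNU polymorphism of some arity $m\geq 3$ (inside its support), then the variety generated by the algebra $(f(D),\supp(\Gamma'))$ admits, via standard pp-interpretation machinery (Maróti--McKenzie, Larose--Zádori, Barto--Kozik), an interpretation of the variety of modules over $\mathbb{Z}_p$ for some prime $p$. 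This yields a pp-interpretation of the language $\Gamma_{\mathrm{lin}}$ of linear equations over some finite Abelian group inside $\Gamma'$. Third, I would package the whole chain "core reduction, restriction to $f(D)$, pp-interpretation" into a single instance-level notion of \emph{simulation} that produces, from every instance of $\VCSP(\Gamma_{\mathrm{lin}})$, an instance of $\VCSP(\Gamma)$ with only a constant-factor blow-up in the number of variables and in the arities of the gadgets. The delicate point, as flagged in the paper's overview, is that we are working with general-valued $\Gamma$ (allowing $\infty$-valued constraints) and must permit pp-interpretations \emph{with equality}, unlike~\cite{Dalmau13:robust,Barto16:sicomp}; equality gadgets must therefore be defined directly in terms of constraints from $\Gamma$.

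For (ii) $\Rightarrow$ (i) I would combine two ingredients. The first is the known Lasserre integrality gap for equations over a finite Abelian group: by Grigoriev~\cite{Grigoriev01:tcs}, Schoenebeck~\cite{Schoenebeck08:focs}, Tulsiani~\cite{Tulsiani09:stoc}, and Chan~\cite{Chan16:jacm}, there is a constant $c>0$ and a family $(J_n)_n$ of $n$-variable unsatisfiable instances of $\VCSP(\Gamma_{\mathrm{lin}})$ for which $\sdpopt{}{}(J_n,\lfloor cn\rfloor)=0$ while $\vcspopt(J_n)=\infty$. The second is a \emph{simulation-preservation} lemma stating that if $\Gamma$ simulates $\Gamma_{\mathrm{lin}}$ via gadgets of constant size, then from an SDP solution $\blambda$ to the $\Las(k)$-relaxation of the simulating instance $I_n$ one can construct an SDP solution $\bmu$ to $\Las(\lfloor k/C\rfloor)$ of $J_n$ with the same objective value, where $C$ is the constant size of the gadgets. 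The vectors $\bmu_j(\tau)$ are defined as sums of $\blambda_i(\sigma)$ over assignments $\sigma$ extending $\tau$ through the gadget; constraints (L1)--(L6) are then checked one by one using that any $\Omega(k/C)$ scopes in $J_n$ pull back to at most $k$ scopes in $I_n$. Combining these two ingredients on the family $(I_n)$ obtained by applying the simulation to $(J_n)$ produces linear-level Lasserre gap instances for $\VCSP(\Gamma)$, establishing (i).

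The main obstacle I expect is the simulation-preservation lemma in (ii) $\Rightarrow$ (i), and, through it, the subtle design of "simulation" in (iii) $\Rightarrow$ (ii). Each individual algebraic step (core reduction, restriction to $f(D)$, pp-interpretation, composition) has to be shown separately to preserve $\Las$-solvability up to a constant factor in the level; verifying the symmetry axiom~\eqref{las:split} across these reductions is the most fragile point, because unions of scopes in $J_n$ may correspond to non-obvious unions of scopes in $I_n$ once equality gadgets and the pp-interpretation's coordinate expansion are taken into account. Handling equality constraints without an equality-free assumption, and keeping the SDP projection well-defined on vectors arising from gadget-internal "auxiliary" variables, is what distinguishes this proof from the Sherali--Adams argument of~\cite{tz17:sicomp} and from the robust-approximation line of work.
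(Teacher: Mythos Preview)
Your proposal is correct and follows the paper's approach: the cycle (i)$\Rightarrow$(iii)$\Rightarrow$(ii)$\Rightarrow$(i) via Theorem~\ref{thm:sa}, the algebraic interpretation of equations in an Abelian group, and a Lasserre-preserving reduction lemma (the paper's Theorem~\ref{thm:reductions}, proved through Lemma~\ref{lem:las-reduc}, which constructs the new SDP vectors exactly as you describe). The one place where the paper is more careful than your sketch is in (iii)$\Rightarrow$(ii): rather than applying the algebraic machinery directly to $(\,f(D),\supp(\Gamma')\,)$, it first passes to the core \emph{with constants} $\Gamma_c=\Gamma'\cup\mathcal{C}_{D'}$ and then extracts a finite \emph{crisp} language $\Delta\subseteq\langle\Gamma_c\rangle$ with $\pol(\Delta)$ violating the BWC (Lemmas~\ref{lem:killingf} and~\ref{lem:crispwnus}), so that Theorem~\ref{thm:Agroup} can be invoked for $\Delta$; this intermediate crisp step is needed because the variety-theoretic interpretation results are stated for $\pol$ of a relational structure rather than for $\supp$ of a general-valued language.
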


Theorems~\ref{thm:sa} and~\ref{thm:main} give the following.

\begin{corollary}
\label{cor:alg-dich}
Let $\Gamma$ be a general-valued constraint language of finite size. Then, either
$\VCSP(\Gamma)$ is solved by the third level of the Sherali-Adams LP relaxation,
or $\VCSP(\Gamma)$ requires linear levels of the Lasserre SDP relaxation.
\end{corollary}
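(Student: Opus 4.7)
The plan is to derive the corollary by a direct case split on whether $\supp(\Gamma)$ satisfies the bounded width condition, using Theorem~\ref{thm:sa} in one case and Theorem~\ref{thm:main} in the other. The two theorems share condition~(iii)---about $\supp(\Gamma)$ and the BWC---and these two conditions are exact negations of each other, so the case split is exhaustive and the two conclusions are mutually exclusive.

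In the case where $\supp(\Gamma)$ satisfies the BWC, I would argue instance-by-instance. Any instance $\inst$ of $\VCSP(\Gamma)$ uses only finitely many weighted relations $\Gamma_\inst \subseteq \Gamma$, and since shrinking a language can only enlarge the set of fractional polymorphisms, $\supp(\Gamma) \subseteq \supp(\Gamma_\inst)$, so $\supp(\Gamma_\inst)$ still satisfies the BWC. The implication (iii)$\Rightarrow$(ii) of Theorem~\ref{thm:sa}, applied to the \emph{finite} language $\Gamma_\inst$, then gives $\vcspopt(\inst) = \lpopt{}{}(\inst, 3)$, so $\VCSP(\Gamma)$ is solved by the third level of the Sherali-Adams LP hierarchy.

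In the case where $\supp(\Gamma)$ violates the BWC, fix an arity $m \geq 3$ for which $\supp(\Gamma)$ contains no WNU operation. I would then invoke a compactness argument: for each $\phi \in \Gamma$, the set of $m$-ary fractional polymorphisms of $\phi$ is a polytope in the finite-dimensional simplex of probability distributions over $m$-ary operations on the fixed finite domain $D$, and $\fpol(\Gamma)$ (restricted to arity $m$) is the intersection of these polytopes. By finite-dimensionality this intersection is already realised by finitely many $\phi$, giving a finite $\Gamma' \subseteq \Gamma$ whose $m$-ary fractional polymorphisms coincide with those of $\Gamma$, and in particular $\supp(\Gamma')$ also contains no $m$-ary WNU and hence violates the BWC. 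Applying (iii)$\Rightarrow$(i) of Theorem~\ref{thm:main} to the finite language $\Gamma'$ produces, for some $0 < c < 1$ and all sufficiently large $n$, gap instances $\inst_n$ of $\VCSP(\Gamma')$ for $\Las(\lfloor cn \rfloor)$; these are automatically instances of $\VCSP(\Gamma)$, witnessing that $\VCSP(\Gamma)$ requires linear levels of Lasserre.

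The entire argument is routine once Theorems~\ref{thm:sa} and~\ref{thm:main} are in hand; the only mild obstacle is the finite-size hypothesis present in both theorems but absent from the corollary, which is handled by the two compactness/localisation observations above.
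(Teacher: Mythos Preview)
Your case split on whether $\supp(\Gamma)$ satisfies the BWC, followed by direct appeals to Theorem~\ref{thm:sa} and Theorem~\ref{thm:main}, is exactly the paper's argument. Where you go further is in trying to accommodate the absence of the finite-size hypothesis from the corollary's statement; the paper's own two-line proof simply invokes the two theorems and does not address this discrepancy (the authors presumably intended finite $\Gamma$, matching both theorems).

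Your treatment of the BWC-holds case via the per-instance finite sublanguage $\Gamma_\inst$ is correct. The compactness argument in the BWC-fails case, however, does not go through as written: it is \emph{not} true that an intersection of infinitely many polytopes in a finite-dimensional simplex must coincide with some finite sub-intersection. A one-line obstruction already in the ambient simplex is $\bigcap_{n\geq 1}\{\omega : \omega(f)\leq 1/n\} = \{\omega : \omega(f)=0\}$, which no finite subfamily realises; nothing in the definition of $\fpol(\phi)$ (rational linear inequalities plus support constraints) rules out analogous behaviour across an infinite $\Gamma$, so you cannot conclude that some finite $\Gamma'\subseteq\Gamma$ has the same $m$-ary fractional polymorphisms as $\Gamma$, nor even that $\supp(\Gamma')$ still lacks an $m$-ary WNU. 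If you really want to cover infinite $\Gamma$, a cleaner route bypasses the BWC split entirely: assume $\VCSP(\Gamma)$ is \emph{not} solved by SA$(3)$, take a single witnessing instance $\inst$, pass to the finite sublanguage $\Gamma_\inst$ it uses, and apply Theorems~\ref{thm:sa} and~\ref{thm:main} to $\Gamma_\inst$; the resulting Lasserre gap instances for $\VCSP(\Gamma_\inst)$ are automatically gap instances for $\VCSP(\Gamma)$.
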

\begin{proof}
Either $\supp(\Gamma)$ satisfies the BWC, in which case $\VCSP(\Gamma)$ is solved by the
third level of the Sherali-Adams LP relaxation by Theorem~\ref{thm:sa}, or
$\supp(\Gamma)$ violates the BWC, in which case $\VCSP(\Gamma)$ requires linear
levels of the Lasserre SDP hierarchy by Theorem~\ref{thm:main}.
\end{proof}

Recall that a constraint language $\Gamma$ is called \emph{crisp} if it contains
only (unweighted) relations. Our result covers this special case, and thus we
get the following corollary, which was independently obtained (using a different
proof) in~\cite{Atserias17:icalp}.

\begin{corollary}
Let $\Gamma$ be a crisp constraint language of finite size. Then, either
$\VCSP(\Gamma)$ is solved by the third level of the Sherali-Adams LP relaxation,
or $\VCSP(\Gamma)$ requires linear levels of the Lasserre SDP relaxation.
\end{corollary}

A constraint language $\Gamma$ is called \emph{finite-valued}~\cite{tz16:jacm}
if for every $\phi\in\Gamma$ it holds $\phi(\vec{x})<\infty$ for every
$\vec{x}$. In this special case, we get the following result, which was
independently obtained (using a different proof) in~\cite{Dawar17:lics}.

\begin{corollary}
Let $\Gamma$ be a finite-valued constraint language of finite size. Then, either
$\VCSP(\Gamma)$ is solved by the first level of the Sherali-Adams LP relaxation,
or $\VCSP(\Gamma)$ requires linear levels of the Lasserre SDP relaxation.
\end{corollary}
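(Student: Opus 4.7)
The plan is to combine Theorem~\ref{thm:main} with the sharper characterisation available in the finite-valued case: for a finite-valued $\Gamma$, the basic LP relaxation (which is precisely the first level SA$(1)$ of the Sherali-Adams hierarchy) already captures everything that any constant level can achieve. So instead of needing SA$(3)$ as in Corollary~\ref{cor:alg-dich}, we can get away with SA$(1)$.

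More concretely, I would split on whether $\supp(\Gamma)$ satisfies the BWC. If it does not, then Theorem~\ref{thm:main} immediately gives that $\VCSP(\Gamma)$ requires linear levels of the Lasserre SDP hierarchy, and there is nothing more to do. If $\supp(\Gamma)$ does satisfy the BWC, then I invoke the known classification of finite-valued CSPs solvable by the basic LP relaxation (Thapper and \v{Z}ivn\'y, JACM 2016): in the finite-valued setting, $\supp(\Gamma)$ satisfying the BWC is equivalent to $\Gamma$ admitting symmetric fractional polymorphisms of every arity, which in turn is equivalent to BLP solving $\VCSP(\Gamma)$ to optimality. Since BLP is exactly SA$(1)$, this yields the first alternative of the corollary.

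The key asymmetry between the general and the finite-valued case that makes this work is that for purely finite-valued objectives one can upgrade the ``weak near-unanimity of all arities on the support'' condition to actual \emph{symmetric} fractional polymorphisms on the whole of $\Gamma$, because there are no $\infty$ values to obstruct the averaging/symmetrisation argument. With such symmetric fractional polymorphisms, one can round any feasible BLP solution to an integral assignment of equal or lower cost by a direct convex-combination argument, which is why SA$(1)$ already suffices. In the general-valued case this collapse from SA$(3)$ down to SA$(1)$ fails, which is why Corollary~\ref{cor:alg-dich} only gives the third level.

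The main obstacle, and the only non-routine point, is thus justifying the implication ``BWC on $\supp(\Gamma)$ $\Rightarrow$ BLP solves $\VCSP(\Gamma)$'' for finite-valued $\Gamma$. This is not re-proved in the present paper; the proof would simply cite the Thapper--\v{Z}ivn\'y dichotomy for finite-valued CSPs (together with the equivalence between admitting symmetric fractional polymorphisms of all arities and $\supp(\Gamma)$ satisfying the BWC in the finite-valued setting). The rest of the argument is a one-line case split, exactly parallel to the proof of Corollary~\ref{cor:alg-dich} but with ``third level'' replaced by ``first level''.
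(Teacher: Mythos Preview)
Your proposal is correct and follows essentially the same line as the paper: both combine Theorem~\ref{thm:main} with the finite-valued dichotomy of~\cite{tz16:jacm}. The paper phrases it as the contrapositive---assuming SA$(1)$ does not solve $\VCSP(\Gamma)$, it invokes \cite[Theorem~3.4]{tz16:jacm} to obtain a simulated $\phi$ with $\argmin\phi=\{(a,b),(b,a)\}$, observes that this forces the BWC to fail, and then applies Theorem~\ref{thm:main}---but this is just your case split run in the other direction.
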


\begin{proof}
Let $D$ be the domain of $\Gamma$. 
If $\VCSP(\Gamma)$ is \emph{not} solved by the first level of the Sherali-Adams
LP relaxation, then~\cite{tz16:jacm} shows (in different terminology) that 
$\Gamma$ can simulate $\phi_{\sf mc}$ (cf. Example~\ref{ex:langs}). Using
$\phi_{\sf mc}$ together with the unary constant relations $c_0$ and $c_1$,
it is then not difficult to express a ternary weighted relation $\phi$ such 
that $\phi(x,y,z)$ minimises on $x+y+z = 0 \pmod{2}$.
Now, $R^3_0 = \opt(\phi)$ together with $c_0$ and $c_1$
can express all remaining relations in $E_{\mathbb{Z}_2,3}$.
Overall, we conclude that $\Gamma$ can simulate 
$E_{\mathbb{Z}_2,3}$, which proves the claim by~Theorem~\ref{thm:main}.
\end{proof}

Lee et al.~\cite{Lee14:arxiv-sdp,Lee15:stoc} give some very strong results on
approximation-preserving reductions between SDP relaxations. They give a general
reduction turning lower bounds on the number of levels of the Lasserre SDP
hierarchy needed for approximation to lower bounds on the size of \emph{arbitrary} SDP
relaxations. In particular, they show that if linear levels of the Lasserre SDP
relaxation are required for some problems then no \emph{polynomial-size SDP
relaxation} suffices.
We now briefly discuss how their result together with
Theorem~\ref{thm:main} can be used to derive the same consequence for
$\VCSP(\Gamma)$ when $\supp(\Gamma)$ violates the BWC. 

Lee et al. give in~\cite[Theorem~6.4]{Lee14:arxiv-sdp} a reduction for turning lower bounds
on the number of levels of the Lasserre SDP hierarchy needed for
approximate maximisation of Max-CSPs to lower bounds on the size of
arbitrary SDP relaxations. In order to apply their theorem in our setting, a
number of differences in the setup of this paper and~\cite{Lee14:arxiv-sdp} must
be addressed. First,
\cite[Theorem~6.4]{Lee14:arxiv-sdp} is stated only for Boolean domains
and proved using~\cite[Theorem~3.8]{Lee14:arxiv-sdp}. However, a
generalisation to arbitrary fixed finite domains follows
from~\cite[Theorem~7.2]{Lee14:arxiv-sdp}~\cite{RS}. Second, the results
in~\cite{Lee14:arxiv-sdp,Lee15:stoc} are formulated for the
sum-of-squares SDP hierarchy, which 
is equivalent to the Lasserre SDP hierarchy: the $k$th level of the
sums-of-squares SDP hierarchy is the same as the $(k/2)$th level of the
Lasserre SDP hierarchy. Third, while the results
in~\cite{Lee14:arxiv-sdp,Lee15:stoc} are formulated for constraint
languages consisting of a single $\{0,1\}$-valued weighted relation, the
proofs give the same result for constraint languages (of finite size)
consisting of $[0,1]$-valued weighted relations of different
arities~\cite{RS}. Finally, while the work
in~\cite{Lee14:arxiv-sdp,Lee15:stoc} deals with maximisation problems,
for exact solvability we can equivalently turn to minimisation problems.

\subsection{Proof of Theorem~\ref{thm:main}}
\label{subsec:overview}

Let $\Gamma$ be a general-valued constraint language of finite size. If
$\supp(\Gamma)$ violates the BWC then we aim to prove that $\VCSP(\Gamma)$
requires linear levels of the Lasserre SDP hierarchy. 

We will follow the approach used in~\cite{tz17:sicomp} to prove the implication
$(\ref{cnd:bound})\Longrightarrow (\ref{cnd:BWC})$ of Theorem~\ref{thm:sa}.
This is
based on the idea that if $\supp(\Gamma)$ violates the BWC, then $\Gamma$ can
simulate linear equations in some Abelian group. In order to establish the
implications
$(\ref{cndmain:BWC})\Longrightarrow(\ref{cndmain:equations})\Longrightarrow(\ref{cndmain:las})$
of Theorem~\ref{thm:main}, it suffices to show that linear equations require
linear levels of the Lasserre SDP hierarchy and that the simulation preserves
exact solvability by the Lasserre SDP hierarchy (up to a constant factor in the
level of the hierarchy). Our contribution is proving the latter. The former is
known~\cite{Grigoriev01:tcs,Schoenebeck08:focs,Tulsiani09:stoc}, as we will now
discuss.

\begin{theorem}[\cite{Chan16:jacm}]\label{thm:eq3linearlass}
Let $\mathcal{G}$ be a finite non-trivial Abelian group.
Then, $\VCSP(E_{\mathcal{G},3})$ requires linear levels of the Lasserre
SDP hierarchy.
\end{theorem}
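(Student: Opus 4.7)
The plan is to follow the Schoenebeck--Tulsiani--Chan template: for each sufficiently large $n$ I would exhibit an unsatisfiable random instance $I_n$ of $\VCSP(E_{\mathcal{G},3})$ (so $\vcspopt(I_n) = \infty$) whose $\Las(\lfloor cn\rfloor)$-relaxation admits feasible vectors of objective value $0$ (so $\sdpopt{}{}(I_n,\lfloor cn\rfloor) = 0$), giving the gap required by Definition~\ref{def:linear}. Fix $\Delta > 1$ depending on $|G|$ and sample $m = \Delta n$ constraints, each of the form $R^3_{a_i}(\tup{x}_i)$ with $\tup{x}_i$ uniform over ordered triples of distinct variables in $V = \{x_1,\ldots,x_n\}$ and $a_i$ uniform in $G$. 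Two properties would hold with probability $1-o(1)$: (a) $I_n$ is unsatisfiable, by a union bound over the $|G|^n$ assignments, each of which satisfies all $m$ constraints with probability $|G|^{-m}$; and (b) the constraint hypergraph is $(cn,\epsilon)$-expanding for some $c,\epsilon > 0$, meaning any set of at most $cn$ constraints touches at least $(2+\epsilon)$ times as many distinct variables as its cardinality. Both follow from routine moment-method computations.

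The second step is to build Lasserre vectors of value $0$ by exploiting (b). For each variable set $T$ with $|T| \leq \lfloor cn\rfloor$, expansion implies that the sub-system consisting of those constraints of $I_n$ whose scope lies entirely in $T$ has strictly more variables than equations, so its solution set $A_T \subseteq G^T$ is a nonempty affine subspace; let $\mu_T$ be the uniform distribution on $A_T$. If $T \subseteq T'$ and both sets are small enough, expansion also forces the coordinate projection $A_{T'} \to A_T$ to be surjective, making $\mu_T$ the marginal of $\mu_{T'}$. I would then realise Lasserre vectors $\blambda_i(\sigma)$ in the Hilbert space $\bigoplus_{|T|\leq \lfloor cn\rfloor}\mathbb{C}[G^T]$ so that $\|\blambda_i(\sigma)\|^2 = \mu_{X_i}(\sigma)$ and $\langle \blambda_i(\sigma_i), \blambda_j(\sigma_j)\rangle$ equals $\mu_{X_i \cup X_j}(\sigma_i \circ \sigma_j)$ when $\sigma_i$ and $\sigma_j$ agree on $X_i \cap X_j$ and $0$ otherwise. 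With these definitions, constraints (L1)--(L5) are immediate verifications, and (L6) follows because the inner product depends only on $X_i \cup X_j$ and $\sigma_i \circ \sigma_j$; the objective is $0$ because $\mu_{X_i}$ is supported on solutions of constraint $i$.

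The main obstacle, and where Chan~\cite{Chan16:jacm} goes beyond Schoenebeck's ($\mathbb{Z}_p$) and Tulsiani's ($\mathbb{Z}_{p^k}$) results, is showing that the inner-product table prescribed above is positive semidefinite, i.e., that such vectors actually exist. Via the structure theorem $\mathcal{G} \cong \prod_j \mathbb{Z}_{p_j^{k_j}}$ and Fourier analysis over $\mathcal{G}^V$, positive semidefiniteness reduces to showing that the uniform distribution on solutions of any $(2+\epsilon)$-expanding linear system over $\mathcal{G}$ is $\lfloor cn\rfloor$-wise uniform on every sufficiently small projection. The expansion argument yields this in each prime-power factor individually, but the subtle point is that the projection of a $\mathcal{G}$-expanding system onto a single factor need not itself remain expanding; this is resolved by weighting the characters according to their order in $\mathcal{G}$. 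Once positive semidefiniteness is established, the constructed vectors witness $\sdpopt{}{}(I_n,\lfloor cn\rfloor) = 0$ against the integral optimum $\vcspopt(I_n) = \infty$, completing the proof.
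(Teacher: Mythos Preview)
The paper does not prove this theorem; it is quoted from \cite{Chan16:jacm} and accompanied only by a short historical discussion (Grigoriev and Schoenebeck for the Boolean case, Tulsiani for prime-power groups, Chan for arbitrary finite Abelian groups). Your proposal therefore goes well beyond what the paper does: you are sketching the argument of the cited works rather than reproducing anything in the present paper.

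As a sketch of the Schoenebeck--Tulsiani--Chan argument your outline is broadly on target, but two steps are stated imprecisely enough to be wrong as written. First, ``more variables than equations'' does \emph{not} imply that an inhomogeneous linear system over $\mathcal{G}$ is satisfiable (e.g.\ $x+y=0$, $x+y=1$); what you actually need is that any \emph{minimal} unsatisfiable subset $S'$ has every variable appearing at least twice, hence $3|S'|\geq 2|\mathrm{vars}(S')|$, which contradicts $(2+\epsilon)$-expansion. The surjectivity of the projection $A_{T'}\to A_T$ requires the same style of argument, not merely a variable count. Second, your reduction of positive semidefiniteness to ``$\lfloor cn\rfloor$-wise uniformity on every sufficiently small projection'' is not the right formulation: the moment matrix is shown to be PSD by diagonalising it over the characters of $\mathcal{G}^V$ and checking that every low-width character either lies in the span of the constraint characters (where the pseudo-expectation is determined) or is orthogonal to it (where it vanishes); expansion is what guarantees that no low-width character witnesses an inconsistency. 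You do correctly identify that handling composite $|\mathcal{G}|$ is the point where Chan's contribution enters, so the high-level plan is sound even if these two justifications need tightening.
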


For Abelian groups of prime orders, Tulsiani showed that there is a
constant $0<c<1$ such that for every large enough $n$ there is an instance
$\inst_n$ of $\VCSP(E_{\mathcal{G},3})$ on $n$ variables with
$\vcspopt(\inst_n)=\infty$ and $\sdpopt{}{}(\inst_n,\lfloor cn\rfloor)=0$; i.e.,
$\inst_n$ is a gap instance for $\Las(\lfloor
cn\rfloor)$~\cite[Theorem~4.2]{Tulsiani09:stoc}.\footnote{We note that
\cite{Tulsiani09:stoc} uses different terminology from ours: Max-CSP($P$) for a
$k$-ary predicate $P$ applied to literals rather than variables.} This work was
based on the result of Schoenebeck who showed it for Boolean
domains~\cite{Schoenebeck08:focs}, 
thus rediscovering the work of Grigoriev~\cite{Grigoriev01:tcs}. A
generalisation to all Abelian groups was then established by Chan
in~\cite[Appendix~D]{Chan16:jacm}. 
Theorem~\ref{thm:eq3linearlass} states that 
distinguishing satisfiable instances of
$\VCSP(E_{\mathcal{G},3})$ from instances in which not all constraints are
simultaneously satisfiable requires linear levels of the Lasserre SDP hierarchy. We remark that the results
in~\cite{Schoenebeck08:focs,Tulsiani09:stoc,Chan16:jacm} actually prove
something much stronger: 
even distinguishing satisfiable instances from instances in which only a small
fraction of the constraints are simultaneously satisfiable requires linear
levels of the Lasserre SDP hierarchy.

The following notion of reduction is key in this paper.

\begin{definition}\label{def:reducesto}
Let $\Gamma$ and $\Delta$ be two general-valued constraint languages of finite size. We
write $\Delta \reducesto \Gamma$ if there is a polynomial-time reduction from
$\VCSP(\Delta)$ to $\VCSP(\Gamma)$ with the following property: there is a constant $c\geq 1$ depending only on $\Gamma$
and $\Delta$ such that for any $k\geq 1$, if $\Las(k)$ solves $\VCSP(\Gamma)$
then $\Las(ck)$ solves $\VCSP(\Delta)$.
\end{definition}

By Definition~\ref{def:reducesto}, $\reducesto$ reductions compose. Let
$\Delta\reducesto\Gamma$. By Definitions~\ref{def:linear}
and~\ref{def:reducesto}, if $\VCSP(\Delta)$ requires linear levels of the
Lasserre SDP hierarchy then so does $\VCSP(\Gamma)$.
An analogous notion of reduction for the Sherali-Adams LP hierarchy, $\sa$, was
used in~\cite{tz17:sicomp}.

The following theorem is the main technical contribution of the paper. It shows
that a general-valued constraint language can be augmented with various additional
weighted relations while preserving exact solvability in the Lasserre SDP
hierarchy up to a constant factor in the level of the hierarchy. It is a strengthening of
Theorem~\cite[Theorem~5.5]{tz17:sicomp}, which showed that the same
additional weighted relations preserve exact solvability in the Sherali-Adams LP
hierarchy.

\begin{theorem}\label{thm:reductions}
Let $\Gamma$ be a general-valued constraint language of finite size on domain $D$.
The following holds:
\begin{enumerate}
\item\label{red:express}
If $\phi$ is expressible in $\Gamma$, then $\Gamma \cup \{\phi\} \reducesto \Gamma$.
\item\label{red:equality}
$\Gamma \cup \{\eq{D}\} \reducesto \Gamma$.
\item\label{red:interpret}
If $\Gamma$ interprets the general-valued constraint language $\Delta$ of finite size, then $\Delta \reducesto \Gamma$.
\item\label{red:feasopt}
If $\phi \in \Gamma$, then 
$\Gamma \cup \{\opt(\phi)\} \reducesto \Gamma$ and
$\Gamma \cup \{\feas(\phi)\} \reducesto \Gamma$.
\item\label{red:coreconstants} If $\Gamma'$ is a core of $\Gamma$ on domain $D' \subseteq D$, then $\Gamma' \cup \mathcal{C}_{D'} \reducesto \Gamma$.
\end{enumerate}
\end{theorem}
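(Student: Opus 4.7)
The plan is to prove each of the five reductions by a local gadget construction together with a procedure for transferring Lasserre solutions between source and target instances, in direct analogy with \cite[Theorem~5.5]{tz17:sicomp} for Sherali-Adams. In each case I argue by contrapositive: given a gap instance $I$ for the source language at Lasserre level $ck$, I construct a gap instance $I'$ for the target language at level $k$ with $\vcspopt(I)=\vcspopt(I')$ and $\sdpopt{}{}(I',k)\leq\sdpopt{}{}(I,ck)$, so that any witnessing Lasserre solution $\blambda$ for $I$ yields a witnessing $\blambda'$ for $I'$. The principal new subtlety compared to Sherali-Adams is that Lasserre variables are vectors obeying the inner-product constraints (L2), (L5), (L6); hence the local-distribution operations used in \cite{tz17:sicomp} must be lifted to vector operations that preserve orthogonality.

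For part~\ref{red:express}, $I'$ is obtained by replacing each $\phi$-constraint of $I$ by the VCSP gadget defining $\phi$, introducing fresh internal variables. Fix once and for all a deterministic minimising witness map $w$ sending each feasible external tuple to an internal assignment achieving $\phi$, and set $c=\ar(\phi)$. Every $X'\subseteq V(I')$ with $|X'|\leq k$ then determines a set $\tilde X\subseteq V(I)$ of size at most $ck$, consisting of $X'\cap V(I)$ together with the external tuples of all gadgets that $X'$ meets internally. Define
\[
\blambda'_{X'}(\sigma')=\sum_{\tilde\sigma}\blambda_{\tilde X}(\tilde\sigma),
\]
where $\tilde\sigma\colon\tilde X\to D$ ranges over extensions of $\sigma'|_{X'\cap V(I)}$ for which $w$ applied to the external tuple of each touched gadget $g$ agrees with $\sigma'$ on $X'\cap V_{\mathrm{int}}(g)$. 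By (L5) for $\blambda$, the vectors $\blambda_{\tilde X}(\tilde\sigma)$ are pairwise orthogonal for distinct $\tilde\sigma$, which reduces the verification of (L2), (L5), (L6) for $\blambda'$ to the corresponding properties of $\blambda$; a parallel bookkeeping shows that the SDP objective values match. Part~\ref{red:equality} is settled by quotienting variables along equality classes, which leaves both the integer optimum and the Lasserre value unchanged. Part~\ref{red:interpret} replaces each source variable by a $d$-tuple of target variables forced into $S$ via $\phi_S\in\langle\Gamma\rangle$ and each source constraint by $h^{-1}(\phi_i)\in\langle\Gamma\rangle$, giving a factor-$d$ blow-up; the use of $\langle\Gamma\rangle$ rather than $\Gamma$ itself is absorbed by parts~\ref{red:express} and~\ref{red:equality}.

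Parts~\ref{red:feasopt} and~\ref{red:coreconstants} follow the same pattern. For $\opt(\phi)$ and $\feas(\phi)$, repeated scaled copies of $\phi$ are used to force the desired optimality or feasibility behaviour; since a Lasserre solution witnessing a gap for $I$ remains a Lasserre solution for the modified instance when $\phi$ is added with sufficient multiplicity, the SDP-value comparison reduces to an elementary counting argument combined with the expressibility argument of part~\ref{red:express}. For part~\ref{red:coreconstants}, the unary operation $f\in\supp(\Gamma)$ with $f(D)=D'$ is by definition in the support of some fractional polymorphism $\omega\in\fpol(\Gamma)$; applying $f$ coordinatewise projects any Lasserre solution for $\Gamma$ onto one supported on $D'$ without increasing the SDP objective (this is where the fractional-polymorphism inequality is used), and the constants $\mathcal{C}_{D'}$ are then supplied by combining this projection with parts~\ref{red:express} and~\ref{red:equality}.

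The main obstacle is the vector construction in part~\ref{red:express}: in Sherali-Adams, linear combinations of local distributions are themselves local distributions, but in Lasserre the constraints on vectors are quadratic, so summing vectors a priori destroys (L5) and (L6). The resolution relies on the fact that $\{\blambda_{\tilde X}(\tilde\sigma)\}_{\tilde\sigma}$ is an orthogonal family by (L5) applied to $\blambda$, so an inner product of two sums of such vectors collapses to a diagonal sum of single inner products to which (L6) of $\blambda$ directly applies; the deterministic choice of $w$ is what prevents cross-terms from appearing. Once this step is in place, the remaining parts are variations on the same orthogonal-sum argument in slightly different combinatorial contexts.
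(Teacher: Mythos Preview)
Your overall strategy matches the paper's: encapsulate the Sherali-Adams gadget reductions of \cite{tz17:sicomp} in a single vector-sum construction, and observe that the orthogonality constraint~(\ref{las:cons}) collapses all cross terms so that the quadratic Lasserre constraints reduce to the linear Sherali-Adams ones. This is exactly the content of the paper's Lemma~\ref{lem:las-reduc}, and your treatment of parts~(\ref{red:express})--(\ref{red:feasopt}) is essentially correct. One minor omission: in part~(\ref{red:interpret}) you say the use of $\langle\Gamma\rangle$ is absorbed by parts~(\ref{red:express}) and~(\ref{red:equality}), but $\langle\Gamma\rangle$ is closed under $\opt$ and $\feas$ as well, so you also need part~(\ref{red:feasopt}) here.

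The genuine gap is in part~(\ref{red:coreconstants}). Your claim that ``applying $f$ coordinatewise projects any Lasserre solution for $\Gamma$ onto one supported on $D'$ without increasing the SDP objective'' is wrong on two counts. First, the direction of the reduction is $\Gamma'\cup\mathcal{C}_{D'}\reducesto\Gamma$, so you must turn a gap witness over domain $D'$ into one over $D$; this is the trivial zero-extension and does not use $f$ at all. The place $f$ is needed is to show $\vcspopt(I)=\vcspopt(J)$ for \emph{integer} assignments. Second, even if one wanted to push a Lasserre solution forward along $f$, the fractional-polymorphism inequality only gives $\E_{g\sim\omega}[\phi(g(\tup{x}))]\leq\phi(\tup{x})$, not $\phi(f(\tup{x}))\leq\phi(\tup{x})$; the single map $f$ need not be objective-nonincreasing, and averaging over $g\sim\omega$ destroys the vector constraints since the resulting convex combination of pushforwards is not itself a Lasserre solution. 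More importantly, you never explain why the constants $\mathcal{C}_{D'}$ are available at all. In the valued setting this is not automatic: one must first manufacture a \emph{crisp} sublanguage $\Delta\subseteq\langle\Gamma'\rangle$ whose unary polymorphisms are all bijections (this is Lemma~\ref{lem:killingf}, using $\opt$ and $\feas$), and only then invoke the classical crisp-core argument to pp-define the constants. The paper handles part~(\ref{red:coreconstants}) via exactly this route, composing Lemma~\ref{lem:killingf} with parts~(\ref{red:express}) and~(\ref{red:feasopt}); your proposal is missing this step.
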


\begin{proof} 
The proof is to a large extent based on a technical lemma,
Lemma~\ref{lem:las-reduc}, which is stated and proved in
Section~\ref{sec:reductions}. This lemma shows that, subject to some
consistency conditions, a polynomial-time reduction between two
constraint languages $\Delta$ and $\Gamma$ that is based on locally
replacing valued constraints with weighted relations in $\Delta$ by
gadgets expressed in $\Gamma$ can be turned into an
$\reducesto$-reduction.
The same approach was used in~\cite[Theorem~5.5]{tz17:sicomp} for constructing
$\sa$-reductions for (\ref{red:express}--\ref{red:interpret}), and
(\ref{red:coreconstants}). In these cases, it therefore essentially
suffices to replace the applications of~\cite[Lemma~6.1]{tz17:sicomp} by
applications of Lemma~\ref{lem:las-reduc} in the proofs
of~\cite[Lemmas~6.2--6.4, and 6.7]{tz17:sicomp}.

For case (\ref{red:interpret}), we remark that our definition differs slightly from that of~\cite{tz17:sicomp} in that we incorporate applications of the operations $\opt$ and $\feas$ as well as
scaling by nonnegative rational constants and addition of rational constants
in the definition of $\langle \Gamma \rangle$.
To accommodate for the operations $\opt$ and $\feas$ in the proof,
it suffices to add an application of (\ref{red:feasopt}).
Furthermore, scaling can be
implemented by repeated constraints and the addition of a constant
changes the value of the objective function of the VCSP instance by the
same constant as the objective function of the SDP relaxation, for all
feasible solutions to the corresponding problems.

For case (\ref{red:coreconstants}), the proof in~\cite[Lemmas~6.7]{tz17:sicomp} also
refers to~\cite[Lemma~5.6]{tz17:sicomp} which also hold
for $\reducesto$-reductions by Lemma~\ref{lem:killingf} below, and cases~(\ref{red:express}) and~(\ref{red:feasopt}).

The remaining two reductions in (\ref{red:feasopt}) are shown in a more straightforward
way for $\sa$-reductions in~\cite[Lemmas~6.5 and 6.6]{tz17:sicomp}.
Here, we argue that the proof of~\cite[Lemmas~6.5]{tz17:sicomp} goes through 
for $\reducesto$-reductions as well, which shows that $\Gamma \cup \{\opt(\phi)\} \reducesto \Gamma$.
We omit the analogous argument for the reduction $\Gamma \cup \{\feas(\phi)\} \reducesto \Gamma$.
In the proof of~\cite[Lemmas~6.5]{tz17:sicomp},
an instance $I$ of $\VCSP(\Gamma \cup \{\opt(\phi)\})$ is transformed into an instance $J$ of
$\VCSP(\Gamma)$ by replacing all occurrences of  $\opt(\phi)$ by multiple copies of $\phi$.
It is then shown that
if $I$ is a gap instance for the SA$(k)$-relaxation,
and $\lambda$ is an optimal solution to this relaxation, 
then $\lambda$ is also a solution to the SA$(k)$-relaxation of $J$.
Moreover, $\lambda$ attains a better value than $\vcspopt(J)$, hence $J$ is also
a gap instance. This argument goes through also if we take $I$ to be a gap
instance for the $\Las(k)$-relaxation, and $\blambda$ an optimal solution to this relaxation.
The exact same solution $\blambda$ then also shows that $J$ is a gap instance
for the $\Las(k)$-relaxation.
\end{proof}

In order to finish the proof of Theorem~\ref{thm:main}, we need a few additional
results. The following result follows, as described in the proof
of~\cite[Theorem~5.4]{tz17:sicomp}, from~\cite{Atseriasetal09:tcs,Kozik15:au}.

\begin{theorem}[\protect{\cite[Theorem~5.4]{tz17:sicomp}}]\label{thm:Agroup}
Let $\Delta$ be a crisp constraint language of finite size
that contains all constant unary relations.
If $\pol(\Delta)$ violates the BWC, then there exists a finite non-trivial Abelian 
group $\mathcal{G}$
such that $\Delta$ interprets $E_{\mathcal{G},r}$, for every $r \geq 1$.
\end{theorem}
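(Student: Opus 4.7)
The approach is to reduce the claim to a known algebraic characterisation of the failure of BWC and then bootstrap from ternary linear equations to arbitrary arity. Since $\Delta$ contains every singleton unary relation on $D$, any $f \in \pol(\Delta)$ must satisfy $f(a,\ldots,a) = a$ for every $a \in D$, so $\pol(\Delta)$ is an idempotent clone and $\mathbb{A} := (D, \pol(\Delta))$ is a finite idempotent algebra whose term clone coincides with $\pol(\Delta)$. The hypothesis then says $\mathbb{A}$ lacks a WNU term of some arity $m \geq 3$.

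The plan is to invoke the combined content of Atserias--Bulatov--Dawar~\cite{Atseriasetal09:tcs} together with Kozik's syntactic characterisation of BWC~\cite{Kozik15:au}: for a finite idempotent algebra, omitting a WNU of some arity at least $3$ forces the presence of an affine (tame-congruence-theoretic type~2) section in a subalgebra of some finite power of $\mathbb{A}$. Concretely, this produces a finite non-trivial Abelian group $\mathcal{G} = (G, +)$, an integer $d \geq 1$, a set $S \subseteq D^d$, and a surjection $h \colon S \to G$ such that the relations $\phi_S$, $h^{-1}(\eq{G})$ and $h^{-1}(R_0^3)$ all lie in $\langle \Delta \rangle$, where $R_0^3 = \{(x_1,x_2,x_3) \in G^3 \mid x_1 + x_2 + x_3 = 0\}$.

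Given this partial interpretation, I would bootstrap to the full language $E_{\mathcal{G},r}$ by pp-definability. The unary relations $h^{-1}(R_a^1)$ for $a \in G$ are obtained by fixing a representative $x_a \in h^{-1}(a) \subseteq D^d$ (the singleton $\{x_a\}$ is a $d$-fold conjunction of unary constants from $\Delta$, hence lies in $\langle \Delta \rangle$) and then using $h^{-1}(\eq{G})$ to equate the $h$-image of the argument with that of $x_a$. For $m \geq 2$, the relation $R_a^m$ is pp-defined recursively by peeling off the first two summands with the aid of auxiliary variables:
\begin{align*}
R_a^2(x_1,x_2) &\equiv \exists z\,\; R_0^3(x_1,x_2,z) \land R_{-a}^1(z), \\
R_a^m(x_1,\ldots,x_m) &\equiv \exists y,z\,\; R_0^3(x_1,x_2,z) \land R_0^2(y,z) \land R_a^{m-1}(y,x_3,\ldots,x_m),
\end{align*}
for $m \geq 3$ (so that $y = -z = x_1 + x_2$). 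Pulling each such formula back along $h$ and using the closure of $\langle \cdot \rangle$ under expressibility (which absorbs existential quantification and conjunction), induction on $m$ gives $h^{-1}(R_a^m) \in \langle \Delta \rangle$ for all $1 \leq m \leq r$ and $a \in G$. Together with $\phi_S$ and $h^{-1}(\eq{G})$, the triple $(d, S, h)$ then witnesses an interpretation of $E_{\mathcal{G},r}$ in $\Delta$, for every $r \geq 1$.

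The only genuinely hard step is the middle one: the passage from ``$\pol(\Delta)$ violates the BWC'' to ``$\Delta$ pp-interprets the ternary equation $R_0^3$ over a non-trivial finite Abelian group with constants''. All of the algebraic weight --- tame congruence theory and Kozik's absorption-style characterisation of WNUs --- is concentrated there, and we would simply cite~\cite{Atseriasetal09:tcs,Kozik15:au} (equivalently, \cite[Theorem~5.4]{tz17:sicomp}) as a black box; the surrounding argument is essentially bookkeeping about closure under pp-definability.
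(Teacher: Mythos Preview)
The paper does not give its own proof of this theorem; it merely cites it as \cite[Theorem~5.4]{tz17:sicomp} and notes that it follows from \cite{Atseriasetal09:tcs,Kozik15:au}. Your proposal is entirely consistent with this: you identify the same two external results as the black box carrying the real algebraic content (the passage from ``no WNU of some arity'' to ``pp-interprets ternary affine equations over a non-trivial Abelian group''), and you correctly flesh out the surrounding bookkeeping that the paper leaves implicit --- namely, that constants in $\Delta$ force idempotence of $\pol(\Delta)$, and that once $R_0^3$ and the constant fibres $h^{-1}(a)$ are available, all $R_a^m$ follow by straightforward pp-definitions. Your recursive pp-formulae for $R_a^2$ and $R_a^m$ are correct, and the closure of $\langle \Delta \rangle$ under expressibility handles the pullback along $h$ exactly as you describe.

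In short: there is no divergence to report, since the paper treats the theorem as a citation; your write-up simply makes explicit what the citation would unpack to.
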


The following two lemmas, together with cases (\ref{red:express}) and
(\ref{red:feasopt}) of Theorem~\ref{thm:reductions}, extend~\cite[Lemma~5.6 and
Lemma~5.7]{tz17:sicomp} from $\sa$-reductions to $\reducesto$-reductions.

\begin{lemma}\label{lem:killingf}
Let $\Gamma$ be a general-valued constraint language over domain $D$ and let $F$ be a
set of operations over $D$.
If $\supp(\Gamma) \cap F = \emptyset$, then there exists a crisp constraint
language $\Delta \subseteq \langle \Gamma \rangle$ such that
$\pol(\Delta) \cap F = \emptyset$. Moreover, if $\Gamma$ and $F$ are finite then so is
$\Delta$.
\end{lemma}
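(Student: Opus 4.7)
The plan is to construct, for each operation $f\in F$ separately, a single crisp relation $R_f\in\langle\Gamma\rangle$ with $f\notin\pol(R_f)$, and then take $\Delta:=\{R_f : f\in F\}$. Finiteness is then automatic: the argument will produce each $R_f$ from only finitely many weighted relations of $\Gamma$, so $|\Delta|\leq|F|$ and $\Delta$ involves only finitely many members of $\Gamma$. So fix an $m$-ary $f\in F$. If $f$ fails to preserve $\feas(\phi)$ for some $\phi\in\Gamma$, then $R_f:=\feas(\phi)\in\langle\Gamma\rangle$ already does the job, so I may assume that $f\in\pol(\feas(\phi))$ for every $\phi\in\Gamma$.

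For the remaining case I would invoke LP duality on the definition of a fractional polymorphism. Consider the LP whose variables are $\omega(g)$ indexed by all $m$-ary operations $g\colon D^m\to D$, with the constraints
\[
\omega(g)\geq 0, \qquad \omega(g)=0 \text{ if } g\notin\pol(\Gamma), \qquad \sum_g\omega(g)=1,
\]
together with, for every $\phi\in\Gamma$ and every $\tup{x}_1,\dots,\tup{x}_m\in\feas(\phi)$, the inequality
\[
\sum_g \omega(g)\,\phi(g(\tup{x}_1,\dots,\tup{x}_m))\ \leq\ \avg\{\phi(\tup{x}_1),\dots,\phi(\tup{x}_m)\},
\]
and whose objective is to maximise $\omega(f)$. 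Since $f\notin\supp(\Gamma)$, the optimum is $0$; in finite dimension this still holds when $\Gamma$ is infinite, since the feasible set is a bounded polytope in $\mathbb{R}^{|D|^{|D|^m}}$ cut out by infinitely many half-spaces. Farkas' lemma (applied to a finite separating subsystem) then yields a finite set of non-negative rational multipliers on these inequalities whose positive combination forces $\omega(f)=0$.

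Clearing denominators and treating each multiplier as a gadget multiplicity, the Farkas certificate assembles into a single $\VCSP(\Gamma)$-instance $I_f$ which, by Definition~\ref{def:expres}, expresses a weighted relation $\phi^*$ on the free scope variables. By construction, the certificate also produces tuples $\tup{y}_1,\dots,\tup{y}_m\in\opt(\phi^*)$ such that $f(\tup{y}_1,\dots,\tup{y}_m)\in\feas(\phi^*)\setminus\opt(\phi^*)$: feasibility of the image is secured by the standing assumption that $f$ preserves $\feas(\phi)$ for each $\phi\in\Gamma$ (hence also for the summed-and-minimised expression $\phi^*$), while strict suboptimality is precisely what the dual inequality witnesses. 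Hence $R_f:=\opt(\phi^*)$ is a crisp relation in $\langle\Gamma\rangle$ (by closure under expressibility and $\opt$) that is not preserved by $f$.

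The main obstacle will be the clean extraction of $\phi^*$ from the Farkas certificate, and in particular verifying that the witness tuples land in $\opt(\phi^*)$ rather than merely $\feas(\phi^*)$; this requires carefully choosing the integer multiplicities and identifying scopes across the contributing inequalities so that the minimum in Definition~\ref{def:expres} is realised on the intended tuples. Once that is set up, collecting the $R_f$ across all $f\in F$ produces a crisp $\Delta\subseteq\langle\Gamma\rangle$ with $\pol(\Delta)\cap F=\emptyset$, finite whenever $\Gamma$ and $F$ are; this parallels the Sherali--Adams-level argument of the corresponding lemma in~\cite{tz17:sicomp}, since the construction only uses the closure properties of $\langle\Gamma\rangle$.
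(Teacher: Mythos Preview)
Your approach is correct and essentially identical to the paper's: split on whether $f\in\pol(\Gamma)$, use $\feas(\phi)$ when it is not, and otherwise build an instance $I_f$ with $f\notin\pol(\opt(\phi_{I_f}))$. The only difference is that the paper outsources the second case to \cite[Lemma~2.9]{tz17:sicomp}, whereas you sketch that lemma's proof via LP duality; your Farkas argument is the standard one (the dual certificate yields integer multiplicities for an instance on variable set $D^m$ in which all projections are optimal and $f$ is strictly suboptimal, so taking $\tup{y}_i=\pi_i$ gives the required witnesses).
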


\begin{proof}
By~\cite[Lemma~2.9]{tz17:sicomp}, for each $f \in F \cap \pol(\Gamma)$,
there is an instance $I_f$ of VCSP$(\Gamma)$
such that $f \not\in \pol(\opt(\phi_{I_f}))$.
Let $\Delta = \{ \opt(\phi_{I_f}) \mid f \in F \} \cup \{ \feas(\phi) \mid \phi
\in \Gamma \} \subseteq \langle \Gamma \rangle$.
For $f \in F \cap \pol(\Gamma)$, we have $f \not\in \pol(\opt(\phi_{I_f})) \supseteq \pol(\Delta)$.
For $f \in F \setminus \pol(\Gamma)$, we have $f \not\in \pol(\phi)$, for some $\phi \in \Gamma$,
so $f \not\in \pol(\Delta)$.
It follows that $\pol(\Delta) \cap F = \emptyset$.
\end{proof}

\begin{lemma}\label{lem:crispwnus}
Let $\Gamma$ be a general-valued constraint language of finite size.
If $\supp(\Gamma)$ violates the BWC,
then there is a crisp constraint language $\Delta \subseteq \langle \Gamma
\rangle$ of finite size 
such that $\pol(\Delta)$ violates the BWC.
\end{lemma}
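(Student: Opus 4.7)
The plan is to apply Lemma~\ref{lem:killingf} to a carefully chosen finite set of operations $F$ that witnesses the BWC violation. Unpacking Definition~\ref{def:bwc}, the fact that $\supp(\Gamma)$ violates the BWC means precisely that there exists some arity $m \geq 3$ for which $\supp(\Gamma)$ contains no $m$-ary operation on $D$ satisfying the identities~(\ref{pseudownu}). Fix such an $m$ and let
\[
F = \{\, f \colon D^m \to D \mid f \text{ satisfies the identities~(\ref{pseudownu})}\,\}.
\]
Since $D$ is finite, there are only finitely many $m$-ary operations on $D$ altogether, so $F$ is finite; and by the choice of $m$ we have $\supp(\Gamma) \cap F = \emptyset$.

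Now I would invoke Lemma~\ref{lem:killingf} with this $F$. The hypothesis $\supp(\Gamma) \cap F = \emptyset$ is satisfied, and both $\Gamma$ and $F$ are finite, so the ``moreover'' clause yields a \emph{finite} crisp constraint language $\Delta \subseteq \langle \Gamma \rangle$ with $\pol(\Delta) \cap F = \emptyset$. The latter equality says that no $m$-ary operation on $D$ satisfying the identities~(\ref{pseudownu}) is a polymorphism of $\Delta$, which is exactly the statement that $\pol(\Delta)$ violates the BWC (at arity $m$).

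There is essentially no obstacle here beyond checking that the witnessing set $F$ is finite, which is immediate from $|D| < \infty$, and that Lemma~\ref{lem:killingf} preserves finiteness, which is guaranteed by its ``moreover'' clause. Note that the same arity $m$ that witnesses the BWC violation for $\supp(\Gamma)$ transfers directly to $\pol(\Delta)$, so no additional argument is needed to match arities. Thus the lemma follows as a short corollary of Lemma~\ref{lem:killingf} combined with the definitions.
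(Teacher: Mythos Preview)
Your proposal is correct and follows essentially the same approach as the paper's own proof: fix the witnessing arity $m$, take $F$ to be the finite set of all $m$-ary operations satisfying~(\ref{pseudownu}), and apply Lemma~\ref{lem:killingf}. Your write-up is slightly more explicit about the finiteness checks and about matching Definition~\ref{def:bwc}'s ``not necessarily idempotent'' phrasing when defining $F$, but the argument is the same.
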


\begin{proof}
Since $\supp(\Gamma)$ violates the BWC, there exists an $m \geq 3$ such that
$\supp(\Gamma)$ does not contain any $m$-ary WNU.
Let $F$ be the (finite) set of all $m$-ary WNUs.
The result follows by applying Lemma~\ref{lem:killingf}
to $\Gamma$ and $F$.
\end{proof}

We are now ready to prove Theorem~\ref{thm:main}.

\begin{proof}[Proof of Theorem~\ref{thm:main}] 
Theorem~\ref{thm:sa} gives the implication
$(\ref{cndmain:las})\Longrightarrow(\ref{cndmain:BWC})$ by contraposition: if
$\supp(\Gamma)$ satisfies the BWC then, by Theorem~\ref{thm:sa}, $\VCSP(\Gamma)$
is solved by any constant level $k$ of the Sherali-Adams LP hierarchy with $k
\geq 3$, and thus also by the $k$th level of the Lasserre SDP hierarchy for $k
\geq \ar(\Gamma)$.

Now, suppose that $\supp(\Gamma)$ violates the BWC.
Let $\Gamma'$ be a core of $\Gamma$ on a domain $D' \subseteq D$ and
let $\Gamma_c = \Gamma' \cup \mathcal{C}_{D'}$.
By~\cite[Lemma~3.7]{tz17:sicomp}, $\supp(\Gamma_c)$
also violates the BWC.
By Lemma~\ref{lem:crispwnus},
there exists a finite crisp constraint language $\Delta$ such that $\Delta$  has an
interpretation in $\Gamma_c$ and $\pol(\Delta)$ violates the BWC.
Since $\mathcal{C}_{D} \subseteq \Gamma_c$, we may assume,
without loss of generality, that $\mathcal{C}_{D} \subseteq \Delta$.
By Theorem~\ref{thm:Agroup}, there exists a finite non-trivial Abelian group
$\mathcal{G}$ and an interpretation of $E_{\mathcal{G},3}$ in $\Delta$.
Since interpretations compose, 
$E_{\mathcal{G},3}$ has an interpretation in $\Gamma_c$.
Therefore, $\Gamma$ can simulate $E_{\mathcal{G},3}$ which
gives the implication $(\ref{cndmain:BWC})\Longrightarrow(\ref{cndmain:equations})$.

Finally, by Theorem~\ref{thm:eq3linearlass}, 
$\VCSP(E_{\mathcal{G},3})$ requires linear levels of the Lasserre SDP hierarchy.
By Theorem~\ref{thm:reductions}(\ref{red:interpret}) and (\ref{red:coreconstants}),
we have $E_{\mathcal{G},3} \reducesto \Gamma_c \reducesto \Gamma$.
Consequently, $\VCSP(\Gamma)$ requires linear levels of the Lasserre SDP
hierarchy as well.
This gives the implication
$(\ref{cndmain:equations})\Longrightarrow(\ref{cndmain:las})$.
\end{proof}

\section{An $\reducesto$-Reduction Scheme} 
\label{sec:reductions}

In this section, we will prove Lemma~\ref{lem:las-reduc}, which is the key
technique used to establish cases (1)--(3) and (5) of
Theorem~\ref{thm:reductions}. It is an analogue
of~\cite[Lemma~6.1]{tz17:sicomp}, which does the same for the
$\sa$-reductions,
and the proof is closely modelled on that
of~\cite[Lemma~6.1]{tz17:sicomp}.

The following observation will be used throughout this section: since the
set of vectors $\{ \blambda_i(\tau) \mid \tau \in D^{X_i} \}$ for a feasible
solution $\blambda$ is orthogonal by~(\ref{las:cons}), it follows that $\|
\sum_{\tau \in T} \blambda_i(\tau) \|^2 = \sum_{\tau \in T} \langle
\blambda_i(\tau), \blambda_i(\tau) \rangle$ for any subset $T \subseteq
D^{X_i}$.

We will also use the following lemma which can be seen as
an additional set of constraints on the $\Las(k)$-relaxation but which
follows directly from the others.

\begin{lemma}\label{lem:vec-marg}
Every feasible solution $\blambda$ to the $\Las(k)$-relaxation satisfies, in
addition to (\ref{las:1})--(\ref{las:split}), the following:
\begin{multline}\label{las:vec-marg}\tag{L7}
\sum_{\tau \colon \Crestrict{\tau}{X_j} = \sigma}
\blambda_i(\tau)  = \blambda_j(\sigma)\hfill
\forall i,j\in [q], X_j \subseteq X_i, |X_i| \leq k, \sigma \colon X_j \to D.
\end{multline}
\end{lemma}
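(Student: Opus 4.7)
Set $\vec{w} := \sum_{\tau\colon \Crestrict{\tau}{X_j} = \sigma}\blambda_i(\tau) - \blambda_j(\sigma)$; the goal is to show $\|\vec{w}\|^2 = 0$, so that $\vec{w} = 0$ and (L7) follows.

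I will extract two consequences of (L5) and (L6). By (L5), any two distinct total assignments $\tau \neq \tau'$ on $X_i$ disagree on $X_i \cap X_i = X_i$, so $\{\blambda_i(\tau) : \tau \in D^{X_i}\}$ is a pairwise orthogonal family; this is exactly the squared-norm decomposition highlighted just before the lemma. Next, applying (L6) with $(i',j') = (i,i)$ and $\sigma_{i'} = \sigma_{j'} = \tau$, using $X_i \cup X_j = X_i$ (because $X_j \subseteq X_i$) together with $\tau \circ \sigma = \tau = \tau \circ \tau$ (because $\Crestrict{\tau}{X_j} = \sigma$), yields $\langle \blambda_i(\tau), \blambda_j(\sigma)\rangle = \|\blambda_i(\tau)\|^2$ for every $\tau$ extending $\sigma$; for $\tau$ with $\Crestrict{\tau}{X_j} \neq \sigma$, (L5) gives $\langle \blambda_i(\tau), \blambda_j(\sigma)\rangle = 0$ directly.

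Expanding $\|\vec{w}\|^2$ by bilinearity, orthogonality collapses $\|\sum_\tau \blambda_i(\tau)\|^2$ to $\sum_{\tau|_{X_j}=\sigma}\|\blambda_i(\tau)\|^2$, the cross term contributes $-2\sum_{\tau|_{X_j}=\sigma}\|\blambda_i(\tau)\|^2$, and the last term is $\|\blambda_j(\sigma)\|^2$, so
\begin{equation*}
\|\vec{w}\|^2 \;=\; \|\blambda_j(\sigma)\|^2 \;-\; \sum_{\tau\colon \Crestrict{\tau}{X_j} = \sigma}\|\blambda_i(\tau)\|^2.
\end{equation*}
Non-negativity of $\|\vec{w}\|^2$ already gives one direction of the required scalar equality. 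For the converse, I would first observe that the residuals $\vec{w}_\sigma$ for distinct $\sigma \colon X_j \to D$ are mutually orthogonal, which is a short direct computation from (L5) and (L6); summing the identity above over all $\sigma$ then reduces the task to the total-mass equation $\sum_\sigma \|\blambda_j(\sigma)\|^2 = \sum_\tau \|\blambda_i(\tau)\|^2$. I would anchor this normalisation to (L4) by choosing a unary padding index $j_x$ with $X_{j_x} = \{x\}$ for some $x \in X_j$ and transporting $\sum_{a\in D}\|\blambda_{j_x}(a)\|^2 = 1$ to both families through the same orthogonal-decomposition argument applied to $\blambda_{j_x}(a)$ expanded in turn against $\{\blambda_i(\tau)\}$ and $\{\blambda_j(\sigma)\}$.

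The main obstacle, as I see it, is precisely this propagation of the unary mass normalisation in (L4) to scopes of arbitrary size at most $k$: combining the collapse of $\|\vec{w}\|^2$ with non-negativity (L2), the orthogonal expansions, and (L4) must be orchestrated so that each one-sided inequality becomes an equality simultaneously at every scope. Once the total-mass identity is in hand, each non-negative $\|\vec{w}_\sigma\|^2$ is a summand in a sum that vanishes, hence each $\vec{w}_\sigma = 0$, which is (L7).
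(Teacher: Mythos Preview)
Your reduction of (L7) to the scalar identity
\[
\|\blambda_j(\sigma)\|^2 = \sum_{\tau\colon \Crestrict{\tau}{X_j}=\sigma}\|\blambda_i(\tau)\|^2
\]
via expanding $\|\vec w\|^2$ is correct and is exactly how the paper begins. The gap is in your plan for the second half. Your ``orthogonal-decomposition argument'' applied with the unary scope $X_{j_x}=\{x\}\subseteq X_j$ (or $\subseteq X_i$) only yields the one-sided bound
\[
\sum_{\sigma}\|\blambda_j(\sigma)\|^2 \leq \sum_{a}\|\blambda_{j_x}(a)\|^2 = 1,
\qquad
\sum_{\tau}\|\blambda_i(\tau)\|^2 \leq 1,
\]
because all you have is $\|\vec w_\sigma\|^2\geq 0$. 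Nothing in your outline produces the reverse inequality, so the ``total-mass equation'' $\sum_\sigma\|\blambda_j(\sigma)\|^2=\sum_\tau\|\blambda_i(\tau)\|^2$ does not follow; your final sentence correctly identifies this as the obstacle but does not resolve it. The phrase ``expanding $\blambda_{j_x}(a)$ against $\{\blambda_i(\tau)\}$'' cannot mean writing $\blambda_{j_x}(a)$ as a sum of the $\blambda_i(\tau)$'s (that is (L7) itself), and computing inner products $\langle\blambda_{j_x}(a),\blambda_i(\tau)\rangle$ again only recovers $\sum_\tau\|\blambda_i(\tau)\|^2$ on one side.

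The paper closes this gap by induction on $|X_i\setminus X_j|$. The crucial step (the case $|X_i\setminus X_j|=1$, $X_j\neq\emptyset$) uses (\ref{las:split}) with a \emph{non-trivial} split $X_i=X_r\cup X_j$, where $X_r=\{x\}=X_i\setminus X_j$, to rewrite $\|\blambda_i(\sigma_a\circ\sigma)\|^2=\langle\blambda_r(\sigma_a),\blambda_j(\sigma)\rangle$; summing over $a$ and invoking the already-proved vector identity $\sum_a\blambda_r(\sigma_a)=\blambda_0$ (the singleton base case from (\ref{las:1}) and (\ref{las:var})) then gives $\sum_a\|\blambda_i(\sigma_a\circ\sigma)\|^2=\langle\blambda_0,\blambda_j(\sigma)\rangle=\|\blambda_j(\sigma)\|^2$. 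Your argument only ever invokes (\ref{las:split}) with the trivial split $X_i\cup X_j=X_i$, which yields the cross term you computed but cannot by itself bootstrap the normalisation upward from unary scopes.
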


\begin{proof}
Consider the norm of the vector $\sum_{\tau \colon
\Crestrict{\tau}{X_j} = \sigma} \blambda_i(\tau) - \blambda_j(\sigma)$.

\begin{equation*}
\begin{aligned}
&\hspace*{1.3em} \| \sum_{\tau \colon \Crestrict{\tau}{X_j} = \sigma} \blambda_i(\tau) - \blambda_j(\sigma) \|^2 \\
&=
\| \sum_{\tau \colon \Crestrict{\tau}{X_j} = \sigma} \blambda_i(\tau) \|^2 - 2
\langle \sum_{\tau \colon X_i \to D} \blambda_i(\tau), \blambda_j(\sigma)
\rangle + \| \blambda_j(\sigma) \|^2 \\
&=
\| \sum_{\tau \colon \Crestrict{\tau}{X_j} = \sigma} \blambda_i(\tau) \|^2 - 2
\sum_{\tau \colon X_i \to D} \langle \blambda_i(\tau), \blambda_j(\sigma)
\rangle + \| \blambda_j(\sigma) \|^2 \\
&=
\| \sum_{\tau \colon \Crestrict{\tau}{X_j} = \sigma} \blambda_i(\tau) \|^2 - 2
\sum_{\tau \colon X_i \to D} \langle \blambda_i(\tau), \blambda_i(\tau) \rangle
+ \| \blambda_j(\sigma) \|^2 \\
&=
- \| \sum_{\tau \colon \Crestrict{\tau}{X_j} = \sigma} \blambda_i(\tau) \|^2 +
\| \blambda_j(\sigma) \|^2,
\end{aligned}
\end{equation*}
where the next to last equality follows from (\ref{las:split}) since $X_j
\subseteq X_i$ and $\sigma = \Crestrict{\tau}{X_j}$.
We see that the equality in the lemma is equivalent to:
\begin{equation}
\label{equiv}
\| \sum_{\tau \colon \Crestrict{\tau}{X_j} = \sigma} \blambda_i(\tau) \|^2 = \|
\blambda_j(\sigma) \|^2.
\end{equation}

We finish the proof by induction on $|X_i \setminus X_j| \geq 1$.
There are two base cases:
\begin{enumerate}[(i)]
\item\label{base1}
If
$|X_i \setminus X_j| = 1$ and $X_j = \emptyset$,
then
(\ref{equiv}) follows immediately from (\ref{las:1}) and (\ref{las:var}).
\item
If
$|X_i \setminus X_j| = 1$ and $X_j \neq \emptyset$,
then
let $X_r = \{x\} = X_i \setminus X_j$ be a scope on the single variable $x$,
and, for $a \in D$, let $\sigma_a$ be the assignment $\sigma_a(x) = a$.
Now, (\ref{equiv}) follows from:
\begin{equation*}
\begin{aligned}
\| \sum_{\tau \colon \Crestrict{\tau}{X_j} = \sigma} \blambda_i(\tau) \|^2
& \makebox[2em]{=}
\sum_{a \in D} \langle \blambda_i(\sigma_a \circ \sigma), \blambda_i(\sigma_a
\circ \sigma) \rangle \\
& \makebox[2em]{$\stackrel{\text{(\ref{las:split})}}{=}$}
\sum_{a \in D} \langle \blambda_{r}(\sigma_a), \blambda_j(\sigma) \rangle \\
& \makebox[2em]{=}
\langle \sum_{a \in D} \blambda_{r}(\sigma_a), \blambda_j(\sigma) \rangle \\
& \makebox[2em]{$\stackrel{\text{(\ref{base1})}}{=}$}
\langle \blambda_0, \blambda_j(\sigma) \rangle, \\
& \makebox[2em]{$\stackrel{\text{(\ref{las:split})}}{=}$}
\langle \blambda_j(\sigma), \blambda_j(\sigma) \rangle.
\end{aligned}
\end{equation*}
\end{enumerate}

Finally, assume that $|X_i \setminus X_j| > 1$ and that $x \in X_i \setminus
X_j$.
Let $r$ be an index such that $X_r = X_j \cup \{x\}$, and,
for $a \in D$, let $\sigma_a$ be the assignment $\sigma_a(x) = a$.
Then,
\begin{equation*}
\begin{aligned}
\sum_{\tau \colon \Crestrict{\tau}{X_j} = \sigma} \blambda_i(\tau)
&=
\sum_{a \in D} \sum_{\tau \colon \Crestrict{\tau}{X_r} = \sigma \circ \sigma_a}
\blambda_i(\tau) \\
&=
\sum_{a \in D} \blambda_r(\sigma \circ \sigma_a) \\
&=
\blambda_j(\sigma),
\end{aligned}
\end{equation*}
where the last two equalities follow by induction.
\end{proof}

For a solution $\blambda$ to the $\Las(k)$-relaxation of $\inst$ with the
objective function $\sum_{i=1}^q\phi(\vec{x}_i)$, we denote by $\bsupp(\blambda_i)$
the positive support of $\blambda_i$, i.e., $\bsupp(\blambda_i)=\{\sigma \colon X_i
\to D \mid ||\blambda_i(\sigma)||^2>0\}$. 

The following technical lemma is the basis for the reductions in Theorem~\ref{thm:reductions}.

\begin{lemma}\label{lem:las-reduc}
Let $\Delta$ and $\Delta'$ be general-valued constraint languages of finite size over domains $D$ and $D'$,
respectively.

Let $(I, i) \mapsto J_i$ be a map 
that to each instance $I$
of $\VCSP(\Delta)$ with variables $V$ and objective function $\sum_{i=1}^q \phi_i(\tup{x}_i)$,
and index $i \in [q]$,
associates an instance $J_i$ of $\VCSP(\Delta')$ with variables $Y_i$
and objective function $\phi_{J_i}$.
Let $J$ be the $\VCSP(\Delta')$ instance with variables $V' = \bigcup_{i=1}^q Y_i$ and
objective function $\sum_{i=1}^q \phi_{J_i}$.

Suppose that the following holds:
\begin{enumerate}[(a)]
\item\label{cond:jtoi}
For every satisfying and optimal assignment $\alpha$ of $J$, there exists a satisfying assignment $\sigma^{\alpha}$
of $I$ such that
\[
\vcspval(I,\sigma^{\alpha}) \leq \vcspval(J,\alpha).
\]
\end{enumerate}

Furthermore, suppose that for any $k \geq \ar(\Delta)$, and any feasible solution
$\blambda$ of the $\Las(k)$-relaxation of $I$, the following properties hold:
\begin{enumerate}[(a)]
\setcounter{enumi}{1}
\item\label{cond:itoj}
For $i \in [q]$, and $\sigma \colon X_i \to D$ with positive support in $\blambda$,
there exists a satisfying assignment $\alpha^{\sigma}_i$ of $J_i$ such that
\[
\phi_i(\sigma(\tup{x}_i)) \geq \vcspval(J_i, \alpha^{\sigma}_i);
\]
\item\label{cond:consistent}
for $i, r \in [q]$, any $X\subseteq V$ with $X_i \cup X_r \subseteq X$, and $\sigma \colon X
\to D$ with positive support in $\blambda$,
\[
\alpha^{\sigma_i}_i|_{Y_i \cap Y_r} = 
\alpha^{\sigma_r}_r|_{Y_i \cap Y_r},
\]
where
$\sigma_i = \Crestrict{\sigma}{X_i}$ and $\sigma_r = \Crestrict{\sigma}{X_r}$.
\end{enumerate}

Then, $I \mapsto J$ is a many-one reduction from $\VCSP(\Delta)$ to
$\VCSP(\Delta')$ that certifies $\Delta \reducesto \Delta'$.
\end{lemma}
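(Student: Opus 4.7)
The plan is to show that the map $I \mapsto J$ witnesses $\Delta \reducesto \Delta'$ with the constant $c = 2\,\ar(\Delta)$. This map is plainly polynomial-time, and condition~(\ref{cond:jtoi}), applied to an optimal satisfying assignment of $J$, immediately gives $\vcspopt(I) \leq \vcspopt(J)$ (trivially so when $J$ is unsatisfiable). By contraposition, it therefore suffices to prove that for every $k \geq \ar(\Delta')$ and every feasible $\blambda$ for the $\Las(ck)$-relaxation of $I$, one can construct a feasible $\bmu$ for the $\Las(k)$-relaxation of $J$ with $\sdpval(J, \bmu, k) \leq \sdpval(I, \blambda, ck)$; this will carry any gap instance of $\Las(ck)$ for $\VCSP(\Delta)$ into a gap instance of $\Las(k)$ for $\VCSP(\Delta')$.

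For the construction of $\bmu$, given any scope $Y \subseteq V'$ of size at most $k$ of the $\Las(k)$-relaxation of $J$, let $S(Y) = \{ i \in [q] : Y_i \cap Y \neq \emptyset \}$ and $X(Y) = \bigcup_{i \in S(Y)} X_i$. Since $|S(Y)| \leq |Y| \leq k$, we have $|X(Y)| \leq k\,\ar(\Delta) = ck/2$, so (adding null constraints to $I$ if necessary) $X(Y)$ is available as a scope of the $\Las(ck)$-relaxation of $I$. For every $\sigma \colon X(Y) \to D$ with $\blambda_{X(Y)}(\sigma) \neq 0$, Lemma~\ref{lem:vec-marg} together with orthogonality guarantees $\blambda_i(\sigma|_{X_i}) \neq 0$ for each $i \in S(Y)$, so the local assignments $\alpha^{\sigma|_{X_i}}_i$ of condition~(\ref{cond:itoj}) exist and, by condition~(\ref{cond:consistent}), glue into a single assignment $\Theta(\sigma)$ defined on $\bigcup_{i \in S(Y)} Y_i \supseteq Y$. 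One then sets $\bmu_0 = \blambda_0$ and
\[
\bmu_Y(\tau) \;=\; \sum_{\substack{\sigma \colon X(Y) \to D \\ \Theta(\sigma)|_Y = \tau}} \blambda_{X(Y)}(\sigma).
\]

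For feasibility, (\ref{las:1}) and (\ref{las:var}) follow from their counterparts for $\blambda$ combined with Lemma~\ref{lem:vec-marg}, and (\ref{las:infeas}) holds because $\alpha^{\sigma|_{X_i}}_i$ satisfies every constraint of $J_i$, making every $\bmu_Y(\tau)$ with an infeasible $\tau$ an empty sum. For (\ref{las:nonnegative}), (\ref{las:cons}) and (\ref{las:split}), one expands $\langle \bmu_Y(\tau), \bmu_{Y'}(\tau')\rangle$ and ascends both sides through Lemma~\ref{lem:vec-marg} to the common scope $W = X(Y) \cup X(Y') = X(Y \cup Y')$ (of size at most $ck$); the inner product then collapses, by orthogonality of $\{\blambda_W(\sigma)\}_\sigma$, to $\sum_\sigma \|\blambda_W(\sigma)\|^2$ taken over those $\sigma$ with $\Theta_W(\sigma)|_{Y \cup Y'} = \tau \circ \tau'$. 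Consistency of $\Theta$ across source scopes, underwritten by (\ref{cond:consistent}), makes this expression depend only on $Y \cup Y'$ and $\tau \circ \tau'$ (giving (\ref{las:split})) and forces it to vanish whenever $\tau$ and $\tau'$ disagree on $Y \cap Y'$ (giving (\ref{las:cons})).

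For the value bound, whenever $Y_j$ is the scope of a constraint $\psi_j$ in $J_i$ we have $Y_j \subseteq Y_i$, so another application of Lemma~\ref{lem:vec-marg} collapses $\bmu_j(\tau)$ to $\sum_{\sigma \colon X_i \to D,\; \alpha^\sigma_i|_{Y_j} = \tau} \blambda_i(\sigma)$. Taking squared norms (by orthogonality), summing over $\tau$ and over the constraints in $J_i$, grouping by $i$, and invoking condition~(\ref{cond:itoj}) gives
\[
\sdpval(J, \bmu, k) \;\leq\; \sum_{i=1}^q \sum_{\sigma \colon X_i \to D} \|\blambda_i(\sigma)\|^2 \phi_i(\sigma(\tup{x}_i)) \;=\; \sdpval(I, \blambda, ck),
\]
which combined with the integer comparison completes the gap transfer. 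The main obstacle is the bookkeeping for (\ref{las:nonnegative}), (\ref{las:cons}) and (\ref{las:split}): inner products of $\bmu$-vectors defined at different source scopes $X(Y), X(Y')$ must be reconciled by ascending to the common refinement $X(Y \cup Y')$ via Lemma~\ref{lem:vec-marg}, and the scope-independence of the gluing map $\Theta$, which rests entirely on condition~(\ref{cond:consistent}), is what ensures the resulting expression depends only on the joint projection $\tau \circ \tau'$ and vanishes precisely when required.
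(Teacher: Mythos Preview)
Your overall architecture matches the paper's: define the target vectors as sums of source vectors indexed by assignments whose glued image hits a given $\tau$, use Lemma~\ref{lem:vec-marg} to pass between scopes, and rely on condition~(\ref{cond:consistent}) to make the gluing well-defined. The value bound and the verification of (\ref{las:1})--(\ref{las:split}) proceed along the same lines as in the paper.

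There is, however, a real gap in one step. You set $S(Y)=\{i\in[q]:Y_i\cap Y\neq\emptyset\}$ and then assert $|S(Y)|\le|Y|$. This is false in general: a single variable $y\in V'$ may lie in arbitrarily many gadgets $Y_i$ (think of an original variable of $I$ that appears in many constraints), so $|S(\{y\})|$ is unbounded. Consequently $X(Y)=\bigcup_{i\in S(Y)}X_i$ need not have size $\le ck/2$, and $\blambda_{X(Y)}$ need not exist as a variable of the $\Las(ck)$-relaxation. Your definition of $\bmu_Y$ then refers to a nonexistent object.

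The fix the paper uses is to choose, for each $y\in Y$, \emph{one} index $i(y)$ with $y\in Y_{i(y)}$, and take $X'=\bigcup_{y\in Y}X_{i(y)}$; this gives $|X'|\le|Y|\cdot\ar(\Delta)$ as you want. The price is that the convenient identity $X(Y\cup Y')=X(Y)\cup X(Y')$ no longer holds, so you must prove that $\bmu_Y$ does not depend on which covering $X'$ you picked. The paper does this explicitly (the ``Claim'' showing $\bmu^r_j=\bmu^i_j$ for any two admissible choices), first for nested scopes via (\ref{las:vec-marg}) and (\ref{cond:consistent}), then for arbitrary pairs by passing through a common superset of size $\le 2k$. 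Once this independence is in place, your verification of (\ref{las:split}) and (\ref{las:cons}) by ascending to a common refinement goes through; without it, the inner-product computation is not well-posed.
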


\begin{proof}
First, we show that $\vcspopt(I) = \vcspopt(J)$.
From condition (\ref{cond:jtoi}), if $J$ is satisfiable, then so is $I$ and 
$\vcspopt(I) \leq \vcspopt(J)$.
Conversely, if $I$ is satisfiable, and $\sigma$ is an optimal assignment to $I$, then
the $\Las(2k)$
solution $\blambda$,
where $k \geq \ar(\Delta)$,
that assigns a fixed unit vector to $\Crestrict{\sigma}{X}$ for every
$X \subseteq V$ with $|X| \leq 2k$ is feasible.
Let $\sigma_i = \Crestrict{\sigma}{X_i}$.
By (\ref{cond:itoj}), there exist 
satisfying assignments $\alpha_i^{\sigma_i}$ of $J_i$, for all $i \in [q]$, such that
$\vcspopt(I) \geq \sdpopt{}{}(I,2k) \geq \sum_{i \in [q]} \vcspval(J_i,\alpha_i^{\sigma_i})$.
Define an assignment $\alpha \colon V' \to D'$ by letting
$\alpha(y) = \alpha_i^{\sigma_i}(y)$ for an arbitrary $i$ such that $y \in Y_i$.
We claim that $\Crestrict{\alpha}{Y_i} = \alpha_i^{\sigma_i}$, for all $i\in
[q]$. From this it follows that $\alpha$ is a satisfying assignment to $J$ such
that
$\sum_{i \in [q]} \vcspval(J_i,\alpha_i^{\sigma_i}) = \vcspval(J,\alpha) \geq \vcspopt(J)$,
and hence that $\vcspopt(I) \geq \vcspopt(J)$.
Indeed, let $y \in V'$ and assume that $y \in Y_{i}$ and $y \in Y_{r}$.
Let $X = X_i \cup X_r$.
Then, since $k\geq\ar(\Delta)$ and $||\blambda(\Crestrict{\sigma}{X})||^2>0$, it follows from (\ref{cond:consistent}) that $\alpha_i^{\sigma_i}(y) = \alpha_r^{\sigma_r}(y)$.
 
Let $k'$ be arbitrary and let $k = \max \{ k', \ar(\Delta') \} \cdot
\ar(\Delta)$. Assume that $I$ is a gap instance for the $\Las(2k)$-relaxation of
$\VCSP(\Delta)$, and let $\blambda$ be a feasible solution such that
$\sdpval(I,\blambda,2k) < \vcspopt(I)$ (where $\vcspopt(I)$ may be $\infty$, i.e.\
$I$ may be unsatisfiable). We show that there is a feasible solution $\bkappa$ to
the $\Las(k')$-relaxation of $J$ such that $\sdpval(J,\bkappa,k') \leq
\sdpval(I,\blambda,2k)$.\footnote{We remark here that the vectors in the feasible solution $\bkappa$ will live in the same
space $\mathbb{R}^t$ as those of $\blambda$. 
This is not a problem as long as $t$ is chosen sufficiently large enough for both of the relaxations.}
Then, by condition (\ref{cond:jtoi}), we have $\vcspopt(I) \leq \vcspopt(J)$.
Hence, $\sdpval(J,\bkappa,k') \leq \sdpval(I,\blambda,2k) < \vcspopt(I) \leq \vcspopt(J)$, so
$J$ is a gap instance for the $\Las(k')$-relaxation of $\VCSP(\Delta')$. Since
$k'$ was chosen arbitrarily, we have $\Delta \reducesto \Delta'$.

To this end,
augment $I$ with null constraints on $X_{q+1}, \dots, X_{q'}$ so that
for every at most $2k$-subset $X \subseteq V$, there exists an $i \in [q']$
such that $X_i = X$.
Rewrite the objective function of $J$ as $\sum_{j=1}^{p} \phi'_j(\tup{y}'_j)$,
$\phi' \in \Delta'$,
where,
by possibly first adding extra null constraints to $J$,
we will assume that for every at most $k'$-subset $Y \subseteq V'$, there exists a $j \in [p]$
such that $Y'_j = Y$.
Here, $Y'_j$ denotes the set of variables occurring in the tuple $\tup{y}'_j$.
For each $i \in [q]$, let $C_i$ be the set of indices $j \in [p]$ corresponding
to the valued constraints in the instance $J_i$.

For $m \geq 1$, define $X_{(\leq m)} = \{ X \subseteq V \mid X = \bigcup_{i \in S} X_i, S \subseteq [q], \left| X \right| \leq m \}$. This is the set of all scopes $X \subseteq V$ of size at most $m$ that can be written as a union of scopes $X_j$ with $j \in [q]$.
Note that this set includes some, but not necessarily all, of the scopes $X_i$, $i \in [q'] \setminus [q]$.

We now extend $\alpha^\sigma_i$ to all indices 
$i \in [q'] \setminus [q]$ for which $X_i \in X_{(\leq 2k)}$.
For a scope $X \in X_{(\leq 2k)}$, define $Y_X = \bigcup_{j \in [q] : X_j \subseteq X} Y_j$.
The idea is that an assignment $\sigma_i \colon X_i \to D$ with $X_i \in X_{(\leq 2k)}$ 
will be mapped to an assignment $\alpha^\sigma_i \colon Y_{X_i} \to D'$.
The assignment $\alpha^\sigma_i$ will be the union of the assignments
$\alpha^\sigma_j$ over all $j \in [q]$ that satisfy $X_j \subseteq X_i$.
For this to be well defined, we need to verify that the assignments $\alpha^\sigma_j$ are pairwise consistent:
Let $\sigma \in \bsupp(\blambda_i)$,
and $r, s \in [q]$ be such that $X_r \cup X_s \subseteq X_i$ and $y \in Y_r \cap Y_s$.
Then, by (\ref{cond:consistent}),
it holds that $\alpha^{\sigma_r}_r(y) = \alpha^{\sigma_s}_s(y)$.
Therefore, we can uniquely define $\alpha^{\sigma}_i \colon Y_{X_i} \to D'$ by letting
$\alpha^{\sigma}_i(y) = \alpha^{\sigma_r}_r(y)$ for any choice of $r \in [q]$ with
$X_r \subseteq X_i$ and $y \in Y_r$.
This definition is consistent with $\alpha^\sigma_i$ for $i \in [q]$ in the sense
that $(\ref{cond:consistent})$ now holds for all $i, r \in [q']$ such that $X_i, X_r \in X_{(\leq 2k)}$.

Let $j \in [p]$ and define $X_{(\leq m)}(Y'_j) = \{ X \in X_{(\leq m)} \mid Y'_j \subseteq Y_X \}$.
In particular, if $X_i \in X_{(\leq 2k)}(Y'_j)$, then $\alpha^\sigma_i$ as defined above can be restricted to an assignment on $Y'_j$.
Next, we show that $X_{(\leq 2k)}(Y'_j)$ is in fact non-empty so that such a scope $X_i$ always exists.
Let $n = |V|$.
The set $X_{(\leq n)}(Y'_j)$ is non-empty since $\bigcup_{i \in [q]} X_i \in X_{(\leq n)}(Y'_j)$.
Arbitrarily pick $X \in X_{(\leq n)}(Y'_j)$.
Then, $X = \bigcup_{i \in S} X_i$ for some $S \subseteq [q]$.
For each $y \in Y'_j$, let $i(y) \in S$ be an index such that $y \in Y_{i(y)}$
and let $X' = \bigcup_{y \in Y'_j} X_{i(y)}$.
Then, $Y'_j \subseteq Y_{X'}$, $X' \subseteq X$,
and $\left|X'\right| \leq \max\{k',\ar(\Delta')\} \cdot \ar(\Delta) = k$,
so $X' \in X_{(\leq k)}(Y'_j)$.
In other words,
\begin{equation}\label{eq:thindown}
\text{for every } X \in X_{(\leq n)}(Y'_j), \text{there exists } i\in[q'] \text{ such that } X_i \subseteq X \text{ and } X_i \in X_{(\leq k)}(Y'_j).
\end{equation}
In particular~(\ref{eq:thindown}) implies that $X_{(\leq 2k)}(Y'_j) \supseteq X_{(\leq k)}(Y'_j)$ is non-empty for every $j \in [p]$.

For $j \in [p]$, $\alpha \colon Y'_j \to D'$, and $i \in [q']$ such that $X_i \in X_{(\leq 2k)}(Y'_j)$, define
\begin{equation}\label{eq:kappadef}
\bmu^i_j(\alpha) = \sum_{\sigma \colon \alpha^{\sigma}_{i}|_{Y'_j} = \alpha} \blambda_i(\sigma).
\end{equation}

\medskip
\noindent
{\bf Claim:}
Definition (\ref{eq:kappadef}) is independent of the choice of $X_i \in X_{(\leq 2k)}(Y'_j)$.
That is,
\begin{equation}\label{eq:toprove}
\bmu^{r}_j = \bmu^{i}_j \qquad \forall r, i \in [q'] \text{ such that } X_r, X_i \in X_{(\leq 2k)}(Y'_j).
\end{equation}

\begin{proof}[Proof of Claim]

First, we prove (\ref{eq:toprove}) for $X_r \subseteq X_i$ with $X_r \in X_{(\leq k)}(Y'_j)$ and $X_i \in X_{(\leq 2k)}(Y'_j)$.
We have
\begin{equation*}
\begin{aligned}
\bmu^{r}_j(\alpha)
& \makebox[2em]{$\stackrel{\text{(\ref{eq:kappadef})}}{=}$}
\sum_{\tau \colon \alpha^{\tau}_{r}|_{Y'_j} = \alpha} \blambda_{r}(\tau) \\
& \makebox[2em]{$\stackrel{\text{(\ref{las:vec-marg})}}{=}$}
\sum_{\tau \colon \alpha^{\tau}_{r}|_{Y'_j} = \alpha} \enspace \sum_{\sigma \colon \Crestrict{\sigma}{X_r}=\tau} \blambda_{i}(\sigma)\\
& \makebox[2em]{=}
\sum_{\sigma \colon \alpha^{\sigma_r}_{r}|_{Y'_j} = \alpha} \blambda_{i}(\sigma)\\
& \makebox[2em]{$\stackrel{\text{(c)}}{=}$} 
\sum_{\sigma \colon \alpha^{\sigma}_{i}|_{Y'_j} = \alpha} \blambda_{i}(\sigma) \\
& \makebox[2em]{$\stackrel{\text{(\ref{eq:kappadef})}}{=}$}
\bmu^{i}_j(\alpha),
\end{aligned}
\end{equation*}

Next, let $X_r \in X_{(\leq 2k)}(Y'_j)$ and $X_i \in X_{(\leq 2k)}(Y'_j)$ be arbitrary.
From~(\ref{eq:thindown}), it follows that $X_r$ contains a subset $X_{s} \in X_{(\leq k)}(Y'_j)$
and that $X_i$ contains a subset $X_{t} \in X_{(\leq k)}(Y'_j)$.
Since $\left| X_{s} \cup X_{t} \right| \leq 2k$,
there exists an index $u \in [q']$ such that $X_{u} = X_{s} \cup X_{t}$.
The claim (\ref{eq:toprove}) now follows by a repeated application of
the first case: $\bmu^{r}_j = \bmu^{s}_j = \bmu^{u}_j = \bmu^{t}_j = \bmu^{i}_j$.
\renewcommand{\qedsymbol}{$\blacksquare$}\end{proof}

By (\ref{eq:toprove}), we can pick an arbitrary $X_i \in X_{(\leq 2k)}(Y'_j)$ and
uniquely define $\bkappa_j = \bmu^i_j$. 

We now show that this definition of $\bkappa$ satisfies the equations (\ref{las:1})--(\ref{las:split}).
Similarly to the definition of $\blambda_0$, 
we let $\bkappa_0$ be a shorthand for $\bkappa_j(\emptyset)$, where
$j$ is the index for which $Y'_j = \emptyset$.

\begin{itemize}
\item
The equation~(\ref{las:1}) holds as $\bkappa_0 = \sum_{\sigma} \blambda_i(\sigma) = 1$
for an arbitrary $i$ by (\ref{las:vec-marg}).
\item 
The equations~(\ref{las:nonnegative}) holds by the linearity of the inner product.
\item 
The equations~(\ref{las:infeas}) hold trivially if $\phi'_j$ is a null constraint.
Otherwise, $j \in C_i$ for some $i \in [q]$.
This implies that $X_i \in X_{(\leq k)}(Y'_j)$, and by (\ref{eq:toprove}) we have $\bkappa_j
= \bmu_j^i$.
Then, $\alpha \in \bsupp(\bkappa_j)$ implies that there is a $\sigma \in \bsupp(\blambda_i)$
such that $\alpha^{\sigma}_{i}|_{Y'_j} = \alpha$.
By condition~(\ref{cond:itoj}) and equation~(\ref{las:infeas}) for $\blambda_i$,
the tuple $\alpha^{\sigma}_{i}(\tup{y}'_j) \in \feas(\phi'_j)$,
so $\bkappa_j$ satisfies (\ref{las:infeas}).
\item
We show that the equations~(\ref{las:var}) hold for $\bkappa$. Let $Y'_j=\{y\}$ be a
singleton and let $X_i\in X_{(\leq k)}(Y'_j)$.
We have
\begin{equation*}
\begin{aligned}
& \sum_{a'\in D'} ||\bkappa_j(a')||^2\\ 
& \makebox[2em]{$\stackrel{\text{(\ref{eq:kappadef})}}{=}$}
\sum_{a'\in D'} \langle \sum_{\sigma \colon \alpha^{\sigma}_{i}(y)=a'} \blambda_i(\sigma), \sum_{\sigma \colon \alpha^{\sigma}_{i}(y)=a'} \blambda_i(\sigma) \rangle \\
& \makebox[2em]{$\stackrel{\text{(\ref{las:cons})}}{=}$}
\sum_{a'\in D'} \sum_{\sigma \colon \alpha^{\sigma}_{i}(y)=a'} \langle  \blambda_i(\sigma), \blambda_i(\sigma) \rangle \\
& \makebox[2em]{=}
\sum_{\sigma} \langle  \blambda_i(\sigma), \blambda_i(\sigma) \rangle \\
& \makebox[2em]{=}
||\sum_{\sigma} \blambda_i(\sigma)||^2 \\ 
& \makebox[2em]{$\stackrel{\text{(\ref{las:vec-marg})}}{=}$} 
|| \blambda_0||^2 \\
& \makebox[2em]{$\stackrel{\text{(\ref{las:1})}}{=}$}
1.
\end{aligned}
\end{equation*}

\item 
The equations~(\ref{las:cons}) hold by linearity of the
inner product and by the equations~(\ref{las:cons}) for $\blambda$.
\item
Finally, we show that the equations~(\ref{las:split}) hold for $\bkappa$.
Let $r, s \in [p]$, and pick assignments $\alpha_r \colon Y'_r \to D'$,
$\alpha_s \colon Y'_s \to D'$.
From (\ref{eq:thindown}) it follows that there are 
$X_u \in X_{(\leq k)}(Y'_r)$ and 
$X_t \in X_{(\leq k)}(Y'_s)$.
Then, there is an index $i \in [q']$ such that $X_i = X_u \cup X_t$.
It follows that $X_i \in X_{(\leq 2k)}(Y'_r)$ and $X_i \in X_{(\leq 2k})(Y'_s)$.
Therefore,
\begin{equation}\label{circ}
\begin{aligned}
& \langle \bkappa_r(\alpha_r), \bkappa_s(\alpha_s) \rangle \\
& \makebox[2em]{$\stackrel{\text{(\ref{eq:kappadef})}}{=}$} 
\langle \sum_{\sigma \colon \alpha^{\sigma}_{i}|_{Y'_r}=\alpha_r} \blambda(\sigma), \sum_{\sigma' \colon \alpha^{\sigma'}_{i}|_{Y'_s}=\alpha_s} \blambda(\sigma') \rangle \\
& \makebox[2em]{=}
\sum_{\sigma \colon \alpha^{\sigma}_{i}|_{Y'_r}=\alpha_r} \sum_{\sigma' \colon \alpha^{\sigma'}_{i}|_{Y'_s}=\alpha_s} \langle \blambda(\sigma), \blambda(\sigma') \rangle\\
& \makebox[2em]{$\stackrel{\text{(\ref{las:cons})}}{=}$}
\sum_{\sigma \colon \alpha^{\sigma}_{i}|_{Y'_r \cup Y'_s}=\alpha_r \circ \alpha_s} \langle \blambda(\sigma), \blambda(\sigma) \rangle\\
\end{aligned}
\end{equation}
Now, let $r', s' \in [p]$ be such that $Y'_r\cup Y'_s = Y'_{r'}\cup Y'_{s'}$
and $\alpha_{r'} \colon Y'_{r'} \to D', \alpha_{s'} \colon Y'_{s'} \to D'$ be such that
$\alpha_r \circ \alpha_s = \alpha_{r'} \circ \alpha_{s'}$.
Then, the right-hand side of (\ref{circ}) is identical for
$\langle \bkappa_r(\alpha_r), \bkappa_s(\alpha_s) \rangle$ and
$\langle \bkappa_{r'}(\sigma_{r'}), \bkappa_{s'}(\sigma_{s'}) \rangle$.
\end{itemize}
We conclude that $\bkappa$ is a feasible solution to the $\Las(k')$-relaxation of $J$.

Let $i \in [q]$ and note that by (\ref{eq:toprove}), for every $j \in
C_i$, we have $\bkappa_j = \bmu_j^i$. 
Therefore,
\begin{equation}
\begin{aligned}
& \sum_{j \in C_i} \sum_{\alpha \in \feas(\phi'_{j})} ||\bkappa_{j}(\alpha)||^2 \phi'_j(\alpha(\tup{y}'_j)) \\
&=
\sum_{j \in C_i} \sum_{\alpha \in \feas(\phi'_{j})} \sum_{\sigma \colon \alpha^{\sigma}_{i}|_{Y'_j} = \alpha} ||\blambda_{i}(\sigma)||^2 \phi'_j(\alpha(\tup{y}'_j))\\
&=\sum_{\sigma \colon \alpha^{\sigma}_{i}|_{Y'_j} \in \feas(\phi'_j)} ||\blambda_{i}(\sigma)||^2
\sum_{j \in C_i} \phi'_j(\alpha^{\sigma}_{i}(\tup{y}'_j))\\
&\leq
\sum_{\sigma \in \bsupp(\blambda_{i})} ||\blambda_{i}(\sigma)||^2 \phi_i(\sigma),
\label{eq:objfn}
\end{aligned}
\end{equation}
where the inequality follows from assumption~(\ref{cond:itoj}).
Summing inequality~(\ref{eq:objfn}) over $i \in [q]$ shows that 
$\sdpval(J,\bkappa,k') \leq \sdpval(I,\blambda,2k)$ and the lemma follows.
\end{proof}

\section*{Acknowledgements}

We thank the anonymous reviewers of both the conference version~\cite{tz17:lics}
and (in particular) the journal version of this paper for their comments.

\newcommand{\noopsort}[1]{}\newcommand{\Zivny}{\noopsort{ZZ}\v{Z}ivn\'y}

\end{document}